\documentclass[a4paper]{article}

\usepackage{amsmath,amssymb,amsthm}
\usepackage[margin=1in]{geometry}

\theoremstyle{definition}
\newtheorem{definition}{Definition}
\theoremstyle{plain}
\newtheorem{theorem}[definition]{Theorem}
\newtheorem{lemma}[definition]{Lemma}
\newtheorem*{corollary}{Corollary}
\theoremstyle{remark}
\newtheorem*{example}{Example}

\usepackage{microtype}

\usepackage{hyperref}

\hypersetup{colorlinks=true,linkcolor=blue,citecolor=blue,urlcolor=blue}

\allowdisplaybreaks[4]

\renewcommand\emptyset{\varnothing}
\newcommand\fin{_{\mathrm f}}
\newenvironment{itemize*}{\renewcommand\item{\ignorespaces}\ignorespaces}{}

\usepackage{pb-diagram}


\title{Systematic Construction of Temporal Logics for Dynamical Systems via Coalgebra}

\author{Baltasar Tranc{\'o}n y Widemann\\[1ex]
  Computer Science, Technical University of Ilmenau\\
  Bayreuth, DE\\
  \texttt{Baltasar.Trancon@tu-ilmenau.de}}




\begin{document}

\maketitle

\begin{abstract}
  Temporal logics are an obvious high-level descriptive companion
  formalism to dynamical systems which model behavior as deterministic
  evolution of state over time.  A wide variety of distinct temporal
  logics applicable to dynamical systems exists, and each candidate
  has its own pragmatic justification.  Here, a systematic approach to
  the construction of temporal logics for dynamical systems is
  proposed: Firstly, it is noted that dynamical systems can be seen as
  coalgebras in various ways.  Secondly, a straightforward standard
  construction of modal logics out of coalgebras, namely Moss's
  coalgebraic logic, is applied. Lastly, the resulting systems are
  characterized with respect to the temporal properties they express.
\end{abstract}

\section{Introduction}

\emph{Dynamical systems} are the classical constructive formalism for
behaviour arising from the deterministic evolution of system state
over time \cite{Birkhoff1927}, dating back to the works of Newton and
Laplace.  Clearly \emph{temporal logics}, with operators such as
`next', `always', `eventually' and `for-at-least', constitute a
companion descriptive formalism.  However, the relation is not
one-to-one: One one hand, there is a unifying theory underlying the
various perspectives on dynamical systems as monoid actions, which
uniformly covers discrete and continuous, as well as hybrid systems
\cite{Jacobs2000}.  But on the other hand, the diversity of temporal
logics in literature is immense, cf.~\cite{Venema2001}, and the choice
for a particular system is often justified by ad-hoc pragmatic
arguments.  The present article explores a systematic and fairly
generic approach to the construction of temporal logics for dynamical
systems, via the rather recent mathematical field of \emph{universal
  coalgebra} which appears to be intimately connected to both
dynamical systems \cite{Rutten2000} and modal logics \cite{Kurz2008}.
A different approach also based on coalgebras and the Stone duality
has been suggested \cite{Bonsangue2005} for constructing modal logics
of \emph{transition systems}, a close relative of dynamical systems in
computer science.

The method outlined in the remainder of this article, while
theoretically simple, touches on many different fields of mathematics:
order theory, category theory, algebra, coalgebra, classical modal
logics à la Kripke, and coalgebraic logics à la Moss \cite{Moss1999}.
Thus a significant proportion of the available space is dedicated to
reviewing the relevant definitions and propositions from the
respective standard literature.  This review makes up the sections
\ref{ingredients1} and \ref{ingredients2}.  The expert reader is
encouraged to skip ahead: Section~\ref{constructions} ties up all the
loose ends and gives a novel contribution.  There a selection of
obvious coalgebraic perspectives on dynamical systems is explored, and
the respective logics entailed by applying Moss's construction are
characterized.

\section{Review: Classical Ingredients}
\label{ingredients1}

This section reviews some basic definitions and propositions.

\subsection{Order Relations}

We assume that the reader is familiar with basic order-theoretic
properties of binary relations, namely with \emph{reflexive},
\emph{transitive}, \emph{symmetric} relations, and with
\emph{preorders}, \emph{partial orders} and \emph{equivalences}.  We
give two additional related definitions that are not quite as
universal:

\begin{definition}
  Let $X$ be a set.  A binary relation $R \subseteq X^2$ is called
  \begin{itemize*}
  \item \emph{non-branching} if and only if $x \mathrel R y$ and $x
    \mathrel R z$ imply $y \mathrel R z$ or $z \mathrel R y$, and
  \item \emph{linear} if and only if $x \mathrel R y$ or $y \mathrel R
    x$,
  \end{itemize*}
  respectively, for all $x, y, z \in X$.
\end{definition}

\subsection{Monoids}

We assume that the reader is familiar with the notions of a
\emph{monoid} $\mathbb M = (M, 0, +)$, and of monoid \emph{generators}
and \emph{cyclic} monoids.  Every monoid induces an ordering relation.

\begin{definition}[Monoid Order]
  Let $\mathbb M = (M, 0, {+})$ be a monoid.  For any elements $a, b
  \in M$, we write $a \leq_{\mathbb M} b$ if and only if there is some
  $c \in M$ such that $a + c = b$.  We say that $a \leq_{\mathbb M} b$
  \emph{via} $c$.  It follows directly from the monoid axioms that
  $\leq_{\mathbb M}$ is reflexive and transitive, hence a preorder.
  By extension, $\mathbb M$ itself is called
  \emph{symmetric}/\emph{non-branching}/\emph{linear} if and only if
  $\leq_{\mathbb M}$ is symmetric/non-branching/linear, respectively.
\end{definition}

Note that being symmetric in this sense is different from being
Abelian.  In fact, symmetry characterizes a subclass of monoids, the
groups.

\begin{lemma}[Groups]
  A monoid $\mathbb M$ is a group if and only if it is symmetric.
  Every symmetric monoid is trivially linear, with the degenerate
  order $(\leq_{\mathbb M}) = M^2$, the full relation.
\end{lemma}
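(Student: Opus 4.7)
The plan is to prove the two directions of the biconditional separately and then derive the \emph{linear} claim as a corollary.

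For the easier direction (group $\Rightarrow$ symmetric) I would observe that in a group every element $a$ has an inverse $-a$, so for any $b \in M$ one has $a + (-a + b) = b$, witnessing $a \leq_{\mathbb M} b$ via $-a + b$. Since this holds for arbitrary $a, b$, the relation $\leq_{\mathbb M}$ is in fact the full relation $M^2$, which is trivially symmetric. This observation also immediately settles the second sentence of the lemma: if $\leq_{\mathbb M} = M^2$, then for all $a, b$ both $a \leq_{\mathbb M} b$ and $b \leq_{\mathbb M} a$, so $\mathbb M$ is linear, and the order is the degenerate full relation as claimed.

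For the converse (symmetric $\Rightarrow$ group) I would start from the identity $0 + a = a$, which gives $0 \leq_{\mathbb M} a$ via $a$. Symmetry then yields $a \leq_{\mathbb M} 0$, i.e.\ there exists some $c \in M$ with $a + c = 0$, which is a right inverse for $a$. The task is to upgrade this right inverse to a two-sided one. I would apply the same argument to $c$: there exists $d \in M$ with $c + d = 0$. A single associativity computation,
\[
  a \;=\; a + 0 \;=\; a + (c + d) \;=\; (a + c) + d \;=\; 0 + d \;=\; d,
\]
shows $d = a$, and hence $c + a = c + d = 0$. Thus $c$ is a genuine two-sided inverse of $a$, and $\mathbb M$ is a group.

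The only place that requires a small trick is the last step of the converse direction: monoid right-invertibility does not in general entail group structure, and the reason it does here is that the hypothesis supplies right inverses uniformly for \emph{all} elements, which can be bootstrapped via associativity as above. Everything else is a direct unfolding of the definition of $\leq_{\mathbb M}$, so I do not anticipate any real obstacle.
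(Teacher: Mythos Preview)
Your argument is correct. The paper actually states this lemma without proof, treating it as elementary, so there is nothing to compare against beyond checking soundness.

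One small organizational remark: you derive the second sentence (``symmetric $\Rightarrow$ linear with $\leq_{\mathbb M}=M^2$'') inside the group $\Rightarrow$ symmetric paragraph, where strictly speaking you have only established group $\Rightarrow$ $\leq_{\mathbb M}=M^2$. The conclusion is still valid once the full biconditional is in hand, but if you want a self-contained proof of the second sentence you can argue directly: from $0 \leq_{\mathbb M} a$ and symmetry you get $a \leq_{\mathbb M} 0$, and then transitivity through $0$ gives $a \leq_{\mathbb M} b$ for all $a,b$. This avoids the detour through the group characterization.
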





\subsection{Dynamical Systems}

\begin{definition}[Dynamical System]
  Let $\mathbb T = (T, 0, {+})$ be a monoid called \emph{time}.  A
  \emph{dynamical system} is an enriched structure $\mathbb S =
  (\mathbb T, S, \Phi)$ with
  \begin{itemize*}
  \item a set $S$ called \emph{state space}, and
  \item a map $\Phi : S \times T \to S$ called \emph{dynamics},
  \end{itemize*}
  such that
  \begin{align}
    \Phi(s, 0) &= s & \Phi\bigl(\Phi(s, t), u\bigr) &= \Phi(s, t + u)
  \end{align}
  In other words, $\Phi$ is a \emph{right monoid action} of $\mathbb
  T$ on $S$.  $\mathbb S$ is called
  \begin{itemize*}
  \item \emph{linear-time} if and only if $\mathbb T$ is linear,
    otherwise \emph{nonlinear-time}, and
  \item \emph{invertible} if and only if $\mathbb T$ is symmetric.
  \end{itemize*}
\end{definition}

\begin{corollary}
  There are no invertible nonlinear-time dynamical systems.
\end{corollary}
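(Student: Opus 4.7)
The plan is to derive this as a direct corollary of the Groups lemma stated just above, with no additional combinatorics on the dynamics $\Phi$ needed at all. The claim is purely about the time monoid $\mathbb{T}$, since ``invertible'' and ``nonlinear-time'' are both defined exclusively in terms of properties of $\mathbb{T}$ (symmetric and non-linear, respectively).

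Concretely, I would argue by contradiction: suppose $\mathbb{S} = (\mathbb{T}, S, \Phi)$ were both invertible and nonlinear-time. By the definition of invertible, the monoid $\mathbb{T}$ is symmetric. By the second sentence of the Groups lemma, every symmetric monoid is linear (in fact, with the degenerate full relation as its induced order). Hence $\mathbb{T}$ is linear, so $\mathbb{S}$ is linear-time, contradicting the assumption that it is nonlinear-time.

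Since the Groups lemma is cited as already established, there is essentially no obstacle and no real calculation. The only thing to flag is the subtle point that the empty state space $S = \emptyset$ does not produce a counterexample: even for $S = \emptyset$, the definition of a dynamical system still requires $\mathbb{T}$ to be a monoid, and the properties ``invertible'' and ``linear-time'' are read off from $\mathbb{T}$ alone, so the contradiction goes through regardless of $S$ and $\Phi$. I would keep the written proof to one or two sentences, essentially just invoking the Groups lemma.
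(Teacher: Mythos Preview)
Your proposal is correct and matches the paper's intent: the corollary is stated without proof precisely because it follows immediately from the definitions of ``invertible'' and ``linear-time'' together with the Groups lemma (symmetric $\Rightarrow$ linear). There is nothing to add.
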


Dynamical systems are a fundamental model class of many natural and
social sciences.  In comparison with their younger counterpart in
computer science, automata and transition systems, dynamical systems
are typically
\begin{itemize}
\item behaviourally weaker -- deterministic, non-pointed (without
  distinguished initial states) and total (without spontaneous
  termination), but
\item structurally stronger -- with additional features of time
  (density, completeness) and state space (topology, metric,
  differential geometry, measures).
\end{itemize}
Automata-like construction can be emulated by dynamical systems; see
examples below.

\begin{definition}[Step, Trajectory, Orbit]
  From the dynamics map we may derive three forms of auxiliary functions:
  \begin{align*}
    \Phi^t &: S \to S
    &
    \Phi_s &: T \to S
    &
    \Phi^\circ &: S \to \mathcal PS
    \\
    \Phi^t(s) &= \Phi(s, t) &
    \Phi_s(t) &= \Phi(s, t) &
    \Phi^\circ(s) &= \mathrm{Img}(\Phi_s) = \{ \Phi(s, t) \mid t \in T \}
  \end{align*}
  \begin{itemize*}
  \item $\Phi^t$ is called the \emph{step} of \emph{duration} $t$, or
    just the $t$-step.
  \item $\Phi_s$ is called the \emph{trajectory} of \emph{initial}
    state $s$.
  \item $\Phi^\circ(s)$ is called the \emph{orbit} of state $s$.
  \end{itemize*}
\end{definition}

\begin{lemma}[Homomorphic Steps]
  The dynamical systems with time $\mathbb T$ are precisely those
  systems $(\mathbb T, S, \Phi)$ such that the step construction is a
  monoid homomorphism from $\mathbb T$ into the monoid of functions of
  type $S \to S$ with right composition.
  \begin{align}
    \Phi^0 &= \mathrm{id}_S & \Phi^{t+u} &= \Phi^u \circ \Phi^t
  \end{align}
  where $\mathrm{id}_X(x) = x$ and $(f \circ g)(x) = f(g(x))$ for all $x$.
\end{lemma}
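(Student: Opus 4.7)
The plan is to recognize that the lemma is an ``iff'' whose two sides are literally the same pair of equations, merely phrased at two different levels of abstraction: on one hand, the action axioms $\Phi(s,0)=s$ and $\Phi(\Phi(s,t),u)=\Phi(s,t+u)$, quantified pointwise in $s\in S$; on the other hand, the homomorphism equations $\Phi^0=\mathrm{id}_S$ and $\Phi^{t+u}=\Phi^u\circ\Phi^t$, phrased in the function monoid. So the proof is a routine unfolding of definitions, and the only care needed is with the orientation of composition.

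First I would dispense with the target structure. The set $S\to S$ under right composition $(f\circ g)(s)=f(g(s))$ with unit $\mathrm{id}_S$ is the standard endofunction monoid; associativity and unit laws are immediate. Hence the statement ``$t\mapsto\Phi^t$ is a monoid homomorphism $\mathbb T\to(S\to S)$'' unpacks precisely to the displayed equations $\Phi^0=\mathrm{id}_S$ and $\Phi^{t+u}=\Phi^u\circ\Phi^t$.

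For the forward implication, assume the action axioms. For the unit, $\Phi^0(s)=\Phi(s,0)=s=\mathrm{id}_S(s)$ for every $s$, giving $\Phi^0=\mathrm{id}_S$. For the product, for every $s$,
\begin{align*}
  \Phi^{t+u}(s) \;=\; \Phi(s,t+u) \;=\; \Phi\bigl(\Phi(s,t),u\bigr) \;=\; \Phi^u\bigl(\Phi^t(s)\bigr) \;=\; (\Phi^u\circ\Phi^t)(s),
\end{align*}
whence $\Phi^{t+u}=\Phi^u\circ\Phi^t$. The converse is the same chain of equalities read in reverse and instantiated at an arbitrary $s\in S$, so both directions reduce to one calculation.

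I do not expect any real obstacle; the only point that requires attention is the orientation of composition. Because $\Phi$ is a \emph{right} monoid action, the earlier time $t$ acts first and $u$ second, which forces $\Phi^u$ to appear on the \emph{outside} of the composite in $\Phi^{t+u}=\Phi^u\circ\Phi^t$. If one instead composed in the opposite order, the map $t\mapsto\Phi^t$ would be an anti-homomorphism, matching a \emph{left} action. This is the only place where the bookkeeping could be tripped up, and flagging it explicitly should suffice.
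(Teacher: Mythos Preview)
Your proposal is correct; the argument is exactly the routine unfolding of definitions that the statement invites, and your remark about the composition orientation matching the right-action convention is the only point worth making. The paper itself states this lemma without proof, so there is nothing further to compare.
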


\begin{corollary}[Generating Steps]
  If $G \subseteq T$ is a generator of $\mathbb T$, then $\Phi$ is
  determined uniquely by the collection of steps $\{ \Phi^t \mid t \in
  G \}$.
\end{corollary}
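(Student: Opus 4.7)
The plan is to combine the preceding \textbf{Homomorphic Steps} lemma with the defining universal property of a monoid generator. The lemma tells us that the family $\{\Phi^t\}_{t\in T}$ is a monoid homomorphism $\mathbb T \to (S\to S, \mathrm{id}_S, \circ)$, and conversely that every such homomorphism yields a dynamical system. So the claim reduces to the purely algebraic statement that a monoid homomorphism is determined by its values on any generating set.

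First I would invoke the definition of a generator: every element $t\in T$ admits some (not necessarily unique) expression $t = g_1 + g_2 + \cdots + g_n$ with $g_i \in G$, where $n = 0$ corresponds to $t = 0$ (the empty sum). Next I would apply the two equations of the \textbf{Homomorphic Steps} lemma inductively on $n$: for $n=0$ we have $\Phi^0 = \mathrm{id}_S$, and for $n>0$ we have
\begin{equation*}
  \Phi^{g_1 + \cdots + g_n} \;=\; \Phi^{g_n} \circ \Phi^{g_1 + \cdots + g_{n-1}},
\end{equation*}
so by induction $\Phi^t$ is expressible as an iterated composition of steps drawn from $\{\Phi^g \mid g\in G\}$ and the identity. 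Since $\Phi$ as a map $S\times T \to S$ is in turn recoverable from $\Phi^t(s) = \Phi(s,t)$, the dynamics is fixed by the data in $\{\Phi^g\mid g\in G\}$.

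The only subtle point — and the place a reader might pause — is that the decomposition of $t$ into generators is typically far from unique, so one has to check that different decompositions produce the same composite step. But this is automatic: the composite is $\Phi^t$, which depends only on $t$, not on the chosen factorisation. Equivalently, the two equations of the lemma already express exactly the compatibility needed to pass from the free monoid on $G$ down to the quotient $\mathbb T$. No further work is required, and no genuine obstacle arises; the corollary is essentially a one-line consequence of the lemma together with the notion of generation.
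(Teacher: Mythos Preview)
Your argument is correct and is exactly the intended one: the paper states this as an immediate corollary of the \textbf{Homomorphic Steps} lemma without further proof, and what you have written simply spells out that implication (a monoid homomorphism is determined on a generating set). Nothing more is needed.
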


\begin{example}[Instances of Time]
  \label{ex:time}%
  \leavevmode
  \begin{itemize}
  \item The time monoid $(\mathbb{N}, 0, +)$ yields standard
    non-invertible, discrete-time dynamical systems.  The step
    $\Phi^1$ is generating.  Trajectories are (one-sided) infinite
    sequences.
  \item The time monoid $(\mathbb{Z}, 0, +)$ yields standard
    invertible, discrete-time dynamical systems.  The step $\Phi^1$ is
    generating and must be invertible.  Trajectories are two-sided
    infinite sequences.
  \item The time monoid $(\mathbb{R}_+, 0, +)$ yields standard
    non-invertible, continuous-time dynamical systems.  No simple step
    generator exists.  Trajectories are one-sided parametric curves.
  \item The time monoid $(\mathbb{R}, 0, +)$ yields standard
    invertible, continuous-time dynamical systems.  No simple step
    generator exists; classical definitions are given as solutions to
    ordinary differential equations.  Trajectories are two-sided
    parametric curves.
  \item The ``time'' monoid $(\Sigma^*, \varepsilon, \cdot)$ over some
    finite alphabet $\Sigma$ yields total semiautomata, or
    deterministic finitely-labelled transition systems.  The steps $\{
    \Phi^a \mid a \in \Sigma \}$ (columns of the transition table) are
    generating.  Trajectories are big-step transition functions of
    total automata, mapping input words to final states.
  \end{itemize}
\end{example}

\subsection{Propositional Modal Logics}

We assume that the reader is familiar with the syntax and semantics of
classical propositional logics and their presentation in terms of the
connectives $\neg$ and $\to$.

\begin{definition}[Syntax of Propositional Modal Logics]
  The modal extension of classical propositional logics adds two unary
  connectives $\Box$ and $\lozenge$, taking $\Box$ as primitive and defining
  \begin{equation*}
    \lozenge A = \neg \Box \neg A
  \end{equation*}
\end{definition}

\begin{definition}[Semantics of Propositional Modal Logics]
  A \emph{normal} modal extension of classical propositional logics
  adds at least the deduction rule of \emph{necessitation} or
  \emph{generalization}, and the axiom of \emph{distribution}:
  \begin{align*}
    A \vdash \Box A &&
    \Box(A \to B) \to (\Box A \to \Box B)
  \end{align*}
\end{definition}

\begin{example}
  Important normal modal logics are obtained by adding certain axioms:
  \begin{itemize}
  \item $\Box A \to A$ added to the minimal system results in the logic $T$.
  \item $\Box A \to \Box \Box A$ added to $T$ results in the logic
    $S4$.
  \item $\Box(\Box A \to B) \lor \Box(\Box B \to A)$ added to $S4$
    results in the logic $S4.3$.
  \item $\lozenge A \to \Box \lozenge A$ added to $S4$ or $S4.3$
    results in the logic $S5$.
  \end{itemize}
\end{example}

\subsection{Kripke Semantics}

\begin{definition}[Kripke Frame]
  A Kripke frame is a structure $(W, R)$ with a set $W$ of
  \emph{worlds} and a relation $R$ on $W$ called \emph{accessibility}.
\end{definition}

\begin{definition}[Kripke Model]
  Let $(W, R)$ be a Kripke frame.  A Kripke model (of propositional
  modal logic) is an extended structure $(W, R, \Vdash)$, where
  $\Vdash$ is a relation between $W$ and the language $\mathit{Prop}$
  of logical formulas, such that
  \begin{equation*}
    \begin{aligned}
      w &\Vdash \neg A && \iff && w \not\Vdash A
      \\
      w &\Vdash A \to B && \iff && w \not\Vdash A \text{~or~} w \Vdash B
      \\
      w &\Vdash \Box A && \iff && v \Vdash A \text{~whenever~} w \mathrel{R} v
    \end{aligned}
  \end{equation*}
  \nopagebreak
  We say that $w$ \emph{satisfies} $A$ in $(W, R, \Vdash)$ if and only
  if $w \Vdash A$.
\end{definition}

\begin{lemma}
  The satisfaction relation $\Vdash$ of a Kripke frame is determined
  uniquely by the satisfaction of atomic propositions.
\end{lemma}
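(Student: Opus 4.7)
The statement is a routine structural-induction claim (modulo the mild abuse of calling the object a ``frame'' rather than a ``model''), and I would prove it by induction on the construction of the formula $A \in \mathit{Prop}$, taking the relation $\Vdash$ restricted to atomic propositions as given and showing that the three semantic clauses in the definition of a Kripke model force the value of $w \Vdash A$ at every world $w \in W$.

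The plan is as follows. First, I would fix a Kripke frame $(W, R)$ and suppose that two satisfaction relations $\Vdash_1$ and $\Vdash_2$ both turn $(W, R)$ into a Kripke model and agree on atoms, i.e.\ $w \Vdash_1 p \iff w \Vdash_2 p$ for every world $w$ and every atomic proposition $p$. I would then prove, by induction on the syntactic complexity of $A$, that $w \Vdash_1 A \iff w \Vdash_2 A$ for all $w \in W$. The base case is the assumption on atoms. For the inductive step, the clauses for $\neg$ and ${\to}$ express $w \Vdash A$ purely in terms of satisfaction of strict subformulas at the \emph{same} world $w$, so the induction hypothesis applies directly; the clause for $\Box A$ expresses $w \Vdash \Box A$ in terms of $v \Vdash A$ for all $R$-successors $v$, where $A$ is again a strict subformula, so the induction hypothesis applies uniformly across $W$. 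The derived connective $\lozenge$ requires no separate argument, since it has been defined away as $\neg \Box \neg$.

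The only subtlety worth naming is the $\Box$ step, which is not local: the value of $w \Vdash \Box A$ depends on values of $\Vdash$ at possibly infinitely many other worlds. This is, however, harmless, because the induction is on the formula $A$ and not on the world $w$; the inductive hypothesis is a statement quantified over all of $W$ simultaneously, so appealing to it at each $R$-successor $v$ is legitimate. Since every formula of propositional modal logic is built from atoms by finitely many applications of $\neg$, ${\to}$, and $\Box$, the induction exhausts $\mathit{Prop}$ and yields $\Vdash_1 = \Vdash_2$, which is the required uniqueness.
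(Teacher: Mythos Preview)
Your argument is correct and is the standard one: structural induction on formulas, with the induction hypothesis quantified over all worlds so that the $\Box$ clause goes through. The paper itself states this lemma without proof, treating it as a routine observation, so there is nothing further to compare.
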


\begin{definition}[Validity]
  A formula $A$ is called \emph{valid} in
  \begin{itemize}
  \item a world $w$ if and only if $w$ satisfies $A$,
  \item a Kripke model $(W, R, {\Vdash})$ if and only if it is valid
    in all worlds $w \in W$,
  \item a Kripke frame $(W, R)$ if and only if it is valid in all
    Kripke models $(W, R, \Vdash)$,
  \item a class $C$ of Kripke frames if and only if it is valid in all
    members of $C$.
  \end{itemize}
\end{definition}

\begin{definition}[Soundness/Completeness]
  A propositional modal logic $L$ is called, with respect to a class
  $C$ of Kripke frames,
  \begin{itemize*}
  \item \emph{sound} if and only if truth in $L$ implies validity in
    $C$, and
  \item \emph{complete} if and only if validity in $C$ implies truth
    in $L$.
  \end{itemize*}
\end{definition}

\begin{theorem}[Soundness/Completeness]
  \label{theorem:log-frame}
  The modal logics $S4$/$S4.3$/$S5$ are sound and complete for the
  class of Kripke frames $(W, R)$ where $R$ is an
  arbitrary/non-branching/symmetric preorder, respectively.
\end{theorem}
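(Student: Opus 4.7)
The plan is to prove soundness and completeness separately, matching each axiom to a first-order condition on the accessibility relation and then invoking the canonical model construction for the converse direction.

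For soundness, I would proceed by induction on proofs. The necessitation rule and the distribution axiom are routinely validated by the semantic clause for $\Box$, independently of any frame condition. For the remaining axiom schemes, I would establish the standard pointwise correspondences: $\Box A \to A$ is valid in $(W,R)$ iff $R$ is reflexive; $\Box A \to \Box\Box A$ is valid iff $R$ is transitive; and $\lozenge A \to \Box\lozenge A$ is valid (in the presence of reflexivity and transitivity) iff $R$ is symmetric, whence an equivalence. The treatment of the $.3$ axiom is the delicate one: $\Box(\Box A \to B) \lor \Box(\Box B \to A)$ is valid in a preorder precisely when $R$ is non-branching. One verifies the $(\Leftarrow)$ direction by case analysis on the relationship between any two $R$-successors of a world, using reflexivity to instantiate $A$, $B$; the $(\Rightarrow)$ direction is shown by contraposition, exhibiting a valuation that refutes the disjunction whenever branching occurs.

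For completeness I would use the canonical model construction. Let $W_L$ be the set of maximal $L$-consistent sets of formulas; define $w R_L v$ by the condition that $\Box A \in w$ implies $A \in v$; and let $w \Vdash p$ iff $p \in w$ for atoms $p$. The truth lemma, $w \Vdash A \iff A \in w$, is proved by induction on $A$ using the Lindenbaum construction to extend the set $\{\neg A\} \cup \{B \mid \Box B \in w\}$ when $\Box A \notin w$. Once this is in place, any $L$-consistent formula is satisfied in the canonical model, so non-theorems have countermodels. It remains to verify that the canonical frame belongs to the intended class: reflexivity follows from $\Box A \to A \in L$; transitivity from $\Box A \to \Box\Box A \in L$; non-branching is extracted from the $.3$ axiom by a standard argument (given two $R_L$-successors $v_1, v_2$ of $w$, assume neither $v_1 R_L v_2$ nor $v_2 R_L v_1$, extract witnessing formulas, and derive a contradiction against the disjunctive axiom at $w$); and symmetry from the $5$ axiom.

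The main obstacle, and the step where I would spend the most care, is the frame-condition verification for the $.3$ axiom in the canonical model. Unlike the reflexive, transitive, and symmetric clauses, which follow by direct instance of a single accessible world, the non-branching condition is a universal statement over \emph{pairs} of successor worlds and requires juggling two witnessing formulas simultaneously and two applications of maximal consistency. Everything else, including the corollary that invertible dynamical systems produce $S5$-style frames and that linear time aligns with the $S4.3$ condition, is bookkeeping on top of this core argument. I would cite the standard references (e.g., Blackburn--de Rijke--Venema) for the routine parts and give the $.3$ correspondence in full detail.
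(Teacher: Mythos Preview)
Your sketch is correct and follows the standard canonical-model route found in, e.g., Blackburn--de~Rijke--Venema. Note, however, that the paper does \emph{not} prove this theorem: it appears in Section~\ref{ingredients1}, which is explicitly a review of classical material, and the statement is recorded without proof as a known result from the modal-logic literature. The paper merely \emph{uses} Theorem~\ref{theorem:log-frame} (together with the finite model property, Theorem~\ref{theorem:log-fmp}) in Section~\ref{constructions} to transfer soundness and completeness to orbital frames. So there is nothing to compare your argument against; your plan to cite a standard reference for the routine parts is exactly what the paper itself does implicitly.

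One small remark on the content of your sketch: the axiom~$5$, $\lozenge A \to \Box\lozenge A$, corresponds on arbitrary frames to the Euclidean property rather than to symmetry outright; symmetry is what you get once reflexivity is already present. You note this parenthetically, but since you are working over $S4$ this is harmless. The $.3$ correspondence you single out is indeed the only non-routine step, and your description of the two-witness argument in the canonical model is accurate.
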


\begin{definition}[Finite Model Property]
  A propositional modal logic $L$ is said to have the \emph{finite
    model property}, if and only if it is complete for a class of
  finite Kripke frames.
\end{definition}

\begin{theorem}
  \label{theorem:log-fmp}
  The modal logics $S4$/$S4.3$/$S5$ have the finite model property,
  for subclasses of the respective classes given in
  Theorem~\ref{theorem:log-frame}.
\end{theorem}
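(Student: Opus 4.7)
The plan is to prove the finite model property by the standard technique of \emph{filtration}, piggy-backing on the soundness/completeness result of Theorem~\ref{theorem:log-frame}. Let $L$ be one of $S4$, $S4.3$, $S5$, and let $\mathcal C_L$ denote the corresponding class of Kripke frames from Theorem~\ref{theorem:log-frame} (arbitrary, non-branching, symmetric preorders). Given any $L$-consistent formula $A$, Theorem~\ref{theorem:log-frame} furnishes a frame $(W,R) \in \mathcal C_L$ and a model $(W, R, \Vdash)$ in which $A$ is satisfied at some world $w_0$. The finite model I shall extract will live on the quotient of $W$ by logical indistinguishability relative to a finite set of formulas built from $A$.

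Concretely, let $\Sigma$ be the subformula closure of $A$, a finite set. Define $w \sim w'$ iff for all $B \in \Sigma$, $w \Vdash B \iff w' \Vdash B$; the quotient $W^\star = W/{\sim}$ is finite, with cardinality at most $2^{|\Sigma|}$. On $W^\star$ one equips a filtrated accessibility relation $R^\star$ satisfying the two filtration conditions: (i) $w \mathrel R v$ implies $[w] \mathrel{R^\star} [v]$, and (ii) if $[w] \mathrel{R^\star} [v]$, then for every $\Box B \in \Sigma$, $w \Vdash \Box B$ implies $v \Vdash B$. A standard \emph{filtration lemma} (by induction on $B \in \Sigma$) then shows that $[w] \Vdash^\star B \iff w \Vdash B$ for the induced valuation on $W^\star$, so that $A$ is satisfied at $[w_0]$ in the finite model. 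The first steps are entirely routine; the delicate part is the choice of $R^\star$, so that (a) both filtration conditions hold and (b) $R^\star$ is again a member of $\mathcal C_L$ (i.e., the frame property is preserved).

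For $S4$, the standard \emph{transitive filtration} $[w] \mathrel{R^\star} [v]$ iff for all $\Box B \in \Sigma$, $w \Vdash \Box B$ implies $v \Vdash \Box B \land B$ yields a reflexive and transitive relation, so $R^\star$ is a preorder. For $S5$, where $R$ is an equivalence, the coarsest filtration $[w] \mathrel{R^\star} [v]$ iff there exist $w' \in [w]$, $v' \in [v]$ with $w' \mathrel R v'$ is symmetric, reflexive, and transitive by inspection, hence an equivalence. The main obstacle is $S4.3$: preservation of non-branching under filtration is not automatic for either the finest or the coarsest filtration, so one has to tailor $R^\star$ carefully. The established remedy, which I would invoke, is to define $[w] \mathrel{R^\star} [v]$ iff every $\Box B \in \Sigma$ with $w \Vdash \Box B$ satisfies $v \Vdash \Box B \land B$ \emph{and} every $\Box B \in \Sigma$ with $v \Vdash \Box B \land B$ satisfies $w \Vdash \Box B$; this is a preorder, and using the non-branching property of $R$ together with the fact that the finite set of $\Box$-formulas in $\Sigma$ totally orders $[w]$-classes by implication strength, one shows $R^\star$ is non-branching. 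Both the filtration lemma and the frame-preservation check are by finite case analysis on $\Sigma$.

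Having exhibited, for each consistent $A$, a finite frame in the appropriate subclass of $\mathcal C_L$ refuting $\neg A$, validity in that subclass implies $L$-truth, giving the required completeness and hence the finite model property.
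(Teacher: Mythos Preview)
The paper does not actually prove this theorem: it appears in the review section (Section~\ref{ingredients1}) and is stated without proof as a standard result from the modal-logic literature, to be invoked later in the proof of the main soundness/completeness theorem for orbital frames. So there is no ``paper's own proof'' to compare against; any self-contained argument you supply goes strictly beyond what the paper does.

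That said, your filtration sketch is the textbook route and is sound in outline for $S4$ and $S5$: the transitive (Lemmon) filtration yields a finite preorder, and for $S5$ the coarsest filtration of an equivalence is again an equivalence. The $S4.3$ case is, as you note, the delicate one, and here your proposal is not quite right as written. The relation you define for $S4.3$ --- requiring both a forward condition and the backward clause ``every $\Box B \in \Sigma$ with $v \Vdash \Box B \land B$ satisfies $w \Vdash \Box B$'' --- is not one of the standard filtrations, and it is not clear that it satisfies filtration condition~(i): from $w \mathrel R v$ in a non-branching preorder one cannot in general deduce your backward clause, since $v \Vdash \Box B$ need not propagate back to $w$. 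The usual remedies in the literature are either (a) to filtrate with the transitive closure and then appeal to the $S4.3$ axiom to linearise the resulting finite preorder into a chain of clusters, or (b) to bypass filtration and use a direct argument that every $S4.3$-consistent formula is satisfiable on a finite linear frame (as in Bull's theorem or the treatment in Chagrov--Zakharyaschev). Either would close the gap; your current $R^\star$ for $S4.3$ needs to be replaced or its correctness argued much more carefully.
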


\section{Review: Additional Ingredients}
\label{ingredients2}

This section reviews some definitions and propositions that are also
basic, but from less well-known fields.  See
\cite{Rutten2000,Moss1999} for greater detail.

\subsection{Category Theory}

\begin{definition}[Set Endofunctor]
  A \emph{functor} $F$ on the category of sets, or \emph{set
    endofunctor}, is a map that assigns
  \begin{itemize*}
  \item to every set $X$ a set $FX$, and
  \item to every function $h : X \to Y$ a function $Fh : FX \to FY$,
  \end{itemize*}
  such that
  \begin{align*}
    F(\mathrm{id}_X) &= \mathrm{id}_{FX} & F(g \circ h) &= Fg \circ Fh
  \end{align*}
  where $\mathrm{id}_X(x) = x$ and $(g \circ f)(x) = g\bigl(f(x)\bigr)$.
\end{definition}

\noindent All functors considered in the following are tacitly set
endofunctors.

\begin{definition}[Monotonic Functor]
  A functor $F$ is called \emph{monotonic} if and only if $X \subseteq
  Y$ implies $FX \subseteq FY$.
\end{definition}

Coalgebraic logics deal with a class of functors called
\emph{standard}, which are essentially monotonic, plus an additional
condition, namely preservation of weak pullbacks, that is rather
technical but fortunately inessential for the present discussion.

\begin{definition}[Finitary Functor]
  A functor is called \emph{finitary} if and only if
  \begin{equation*}
    FX \subseteq \textstyle\bigcup \{ FY \mid Y \subseteq X; Y \text{~finite} \}
  \end{equation*}
  otherwise \emph{infinitary}.  For monotonic finitary functors, the above
  is necessarily an equality.  A standard, infinitary functor $F$
  has a \emph{finitary restriction} $F\fin$ defined by
  \begin{align*}
    F\fin X &= \textstyle\bigcup \{ FY \mid Y \subseteq X; Y \text{~finite} \}
    &
    F\fin (h : X \to Y) &= F h \rvert_{F\fin X}
  \end{align*}
\end{definition}

\begin{definition}[Functor Product]
  The pointwise Cartesian product of functors is again a functor.
  \begin{align*}
    (F \times G)X &= FX \times GX & \bigl((F \times G)h\bigr)(x, y) &= \bigl((Fh)(x), (Gh)(y)\bigr)
  \end{align*}
\end{definition}

\begin{example}
  The following are standard functors:
  \begin{itemize}
  \item The \emph{identical} functor $\mathcal I$
    \begin{align*}
      \mathcal IX &= X & \mathcal Ih &= h
    \end{align*}
    $\mathcal I$ is finitary; hence $\mathcal I\fin = \mathcal I$.
  \item The \emph{constant} functor $\_ @ C$ for some set $C$
    \begin{align*}
      X @ C &= C & h @ C &= \mathrm{id}_C
    \end{align*}
    $\_ @ C$ is finitary.
  \item the \emph{powerset} functor $\mathcal P$
    \begin{align*}
      \mathcal PX &= \{ W \mid W \subseteq X \} & (\mathcal Ph)(W) &=
      \{ h(x) \mid x \in W \}
    \end{align*}
    $\mathcal P$ is not finitary; its finitary restriction is the
    \emph{finite powerset} functor $\mathcal P\fin$.
  \item the \emph{Hom} functor $\_^C$ for some set $C$
    \begin{align*}
      X^C &= \{ f \mid f : C \to X \} & (h^C)(g) &= h \circ g
    \end{align*}
    $\_^C$ is finitary if and only if $C$ is finite; its finitary
    restriction is the \emph{image-finite} functor $\_^C\fin$.
  \end{itemize}
\end{example}

Clearly, a relation $R \in \mathcal P(X \times Y)$ is precisely the
set of pairs $(x, y)$ for which there is some $r \in R$ such that
$\pi_1(r) = x$ and $\pi_2(r) = y$, where $\pi_1, \pi_2$ are the
natural projections from the binary Cartesian product.  This seemingly
redundant presentation suggests an interaction of relations and
functors.

\begin{definition}[Relational Lifting]
  Let $F$ be a functor.  Every relation $R \in \mathcal P(X \times Y)$
  has a \emph{lifting} $F[R] \in \mathcal P(FX \times FY)$ defined as
  the set of pairs $(\hat x, \hat y)$ for which there is some $\hat r
  \in FR$ such that $(F \pi_1)(\hat r) = \hat x$ and $(F \pi_2)(\hat
  r) = \hat y$.
\end{definition}

\begin{example}
  The liftings for the functors discussed above are as follows:
  \begin{itemize}
  \item The identical functor lift a relation to itself: $x
    \mathrel{\mathcal I[R]} y$ if and only if $x \mathrel R y$.
  \item The constant functor lifts to the identity relation: $c
    \mathrel{[R] @ C} c'$ if and only if $c = c' \in
    C$.
  \item $Y \mathrel{\mathcal P[R]} Z$ if and only if for all $y \in Y$
    there is a $z \in Z$, and vice versa, such that $y \mathrel{R} z$.
  \item $f \mathrel{[R]^C} g$ if and only if $f(c) \mathrel R g(c)$
    for all $c \in C$.
  \end{itemize}
\end{example}

\subsection{Universal Coalgebra}

\begin{definition}[Coalgebra]
  Let $F$ be a functor.  An $F$-\emph{coalgebra} is a structure $(X,
  f)$ with an object $X$ and an arrow $f : X \to FX$.
\end{definition}

\begin{definition}[Homomorphism]
  Let $F$ be a functor.  Let $(X, f)$ and $(Y, g)$ be $F$-coalgebras.
  An $F$-\emph{coalgebra homomorphism} from $(X, f)$ to $(Y, g)$ is an
  arrow $h : X \to Y$ such that $Fh \circ f = g \circ h$.  We write $h
  : (X, f) \to (Y, g)$ or simply $h : f \to g$.
\end{definition}

\begin{definition}[Final Coalgebra]
  Let $F$ be a functor.  An $F$-coalgebra $(Z, z)$ is called
  \emph{final} if and only if there is a unique homomorphism $f! : f
  \to z$ from any other $F$-coalgebra.
\end{definition}

\begin{theorem}
  Every finitary functor has a final coalgebra.
\end{theorem}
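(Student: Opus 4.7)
The plan is to construct the final coalgebra as a limit of the \emph{terminal sequence} of $F$. I would start with the terminal set $Z_0 = 1 = \{*\}$ and inductively define $Z_{n+1} = F Z_n$, together with connecting maps $p_n : Z_{n+1} \to Z_n$, where $p_0$ is the unique map into the singleton and $p_{n+1} = F p_n$. Form the limit of this $\omega^{\mathrm{op}}$-chain in the category of sets, concretely the set $Z_\omega$ of coherent sequences $(z_n)_{n < \omega}$ satisfying $p_n(z_{n+1}) = z_n$, equipped with projections $\pi_n : Z_\omega \to Z_n$.

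The central step is to equip $Z_\omega$ with an $F$-coalgebra structure $\zeta : Z_\omega \to F Z_\omega$ that is in fact an isomorphism, so that Lambek-style reasoning applies. This is where finitariness enters: since $F Z_\omega = \bigcup \{ F Y \mid Y \subseteq Z_\omega \text{ finite} \}$, every element of $F Z_\omega$ has a witness in some $F Y$ with $Y$ finite, and a finite $Y$ can be tracked coherently through the projections $\pi_n$. This lets one define a candidate inverse $\zeta^{-1} : F Z_\omega \to Z_\omega$ by assembling the compatible family $F \pi_n : F Z_\omega \to F Z_n = Z_{n+1}$, followed by $p_n$. If the $\omega$-step does not already suffice, I would continue the construction transfinitely; the classical result of Worrell guarantees convergence by stage $\omega + \omega$ for any finitary set endofunctor.

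Finality then follows from the universal property of the limit. Given any $F$-coalgebra $(X, f)$, I would recursively define a cone $f_n : X \to Z_n$ by $f_0$ the unique map and $f_{n+1} = F f_n \circ f$, check the coherence $p_n \circ f_{n+1} = f_n$ by induction on $n$, and obtain a unique mediating map $f! : X \to Z_\omega$ with $\pi_n \circ f! = f_n$. A short diagram chase using the isomorphism $\zeta$ shows that $\zeta \circ f! = F f! \circ f$, so $f!$ is a coalgebra homomorphism; uniqueness is inherited from uniqueness of the mediating map into the limit.

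The main obstacle is the convergence step: showing that the terminal sequence actually stabilizes in such a way that the induced $\zeta$ is an \emph{isomorphism} rather than merely a map. Without finitariness the sequence may fail to converge at any countable stage; with it, one needs the nontrivial surjectivity/injectivity analysis of the connecting maps beyond $\omega$ to reach the fixpoint. All remaining steps — the cone construction, the homomorphism square, and uniqueness — are routine once the isomorphism $\zeta$ is available.
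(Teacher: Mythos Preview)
The paper does not actually prove this theorem: it appears in Section~\ref{ingredients2}, which reviews standard material from universal coalgebra, and the statement is simply quoted as a known fact without proof. So there is no ``paper's own proof'' to compare against.

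Your proposal is the standard argument and is essentially correct. The terminal-sequence construction you outline is exactly how this result is usually established, and your identification of the key difficulty --- that convergence need not occur at stage $\omega$ for a general finitary functor (the finite powerset $\mathcal P\fin$ being the classical counterexample) --- is accurate. Citing Worrell's theorem that the sequence stabilizes by $\omega+\omega$ is the right way to close that gap. One small remark: the informal sentence about ``tracking a finite $Y$ coherently through the projections'' is not by itself enough to produce the inverse $\zeta^{-1}$; the real content is precisely the Worrell convergence result you invoke afterward, so in a written-up version you should lean on that rather than on the heuristic. The remaining verifications (cone construction, homomorphism square, uniqueness) are indeed routine once the isomorphism is in hand.
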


\begin{definition}[Bisimulation]
  Let $F$ be a functor.  Let $(X, f)$ and $(Y, g)$ be $F$-coalgebras.
  A \emph{bisimulation} between $(X, f)$ and $(Y, g)$ is a relation $R
  \subseteq X \times Y$ that can be extended to an $F$-coalgebra $(R,
  r)$ such that the projections are coalgebra homomorphisms $\pi_1 : r
  \to f$ and $\pi_2 : r \to g$.  We say that states $x \in X$ and $y
  \in Y$ are \emph{bisimilar} if and only if there is a bisimulation
  relating them.
\end{definition}

The final coalgebra can be seen as a system of representatives of
equivalence classes modulo bisimilarity.

\begin{theorem}
  Let $F$ be a standard functor. If a final $F$-coalgebra $(Z, z)$
  exists then, for given $F$-coalgebras $(X, f)$ and $(Y, g)$, two
  states $x \in X; y \in Y$ are bisimilar if and only if $f!(x) =
  g!(y)$.
\end{theorem}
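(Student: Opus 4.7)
The plan is to prove each direction separately. For the forward direction (bisimilarity implies $f!(x) = g!(y)$), the strategy is to appeal to uniqueness of maps into the final coalgebra. Given a bisimulation $R$ with coalgebra structure $r : R \to FR$, by finality there is a unique homomorphism $r! : (R, r) \to (Z, z)$. But $f! \circ \pi_1$ and $g! \circ \pi_2$ are each compositions of coalgebra homomorphisms $(R, r) \to (Z, z)$, hence both equal $r!$. Evaluating at any $(x, y) \in R$ then yields $f!(x) = g!(y)$.

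For the reverse direction, the natural candidate bisimulation is the \emph{kernel} relation
\begin{equation*}
  R = \{(x', y') \in X \times Y \mid f!(x') = g!(y')\}
\end{equation*}
with the usual projections $\pi_1, \pi_2$. The task is to equip $R$ with a coalgebra map $r : R \to FR$ making $\pi_1, \pi_2$ into $F$-coalgebra homomorphisms. For $(x', y') \in R$, the homomorphism property of $f!$ and $g!$, together with $f!(x') = g!(y')$, yields
\begin{equation*}
  (Ff!)\bigl(f(x')\bigr) = z\bigl(f!(x')\bigr) = z\bigl(g!(y')\bigr) = (Fg!)\bigl(g(y')\bigr).
\end{equation*}
So the pair $\bigl(f(x'), g(y')\bigr)$ lies in the ``kernel'' of $(Ff!, Fg!)$.

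The crux, and the main obstacle, is the extraction of a witness $r(x', y') \in FR$ projecting to $f(x')$ and $g(y')$ under $F\pi_1$ and $F\pi_2$ respectively. This is exactly the statement that $FR$, together with the maps $F\pi_1, F\pi_2$, forms a weak pullback of $Ff!$ and $Fg!$; in \textbf{Set}, $R$ itself is the (strict) pullback of $f!$ and $g!$, so the existence of the required witness is precisely the condition that $F$ \emph{preserves weak pullbacks}. This is where the hypothesis that $F$ is standard enters essentially. Granted this, pointwise selection (using the axiom of choice) of such witnesses defines the desired coalgebra structure $r$, making $R$ a bisimulation. Since $(x, y) \in R$ by assumption, $x$ and $y$ are bisimilar, completing the proof.
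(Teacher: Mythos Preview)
Your argument is correct and is in fact the standard proof of this result as found in the coalgebra literature (e.g.\ Rutten's \emph{Universal coalgebra}, cited in the paper). However, note that the paper itself does \emph{not} supply a proof: this theorem appears in Section~\ref{ingredients2} among the reviewed background results, stated without proof and with a pointer to \cite{Rutten2000,Moss1999} for details. So there is no ``paper's own proof'' to compare against; you have simply reconstructed the classical argument.

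One small remark on presentation: in the reverse direction you invoke the axiom of choice to pick witnesses pointwise. This is fine, but it is worth noting that in \textbf{Set} the image of $FR$ under $\langle F\pi_1, F\pi_2\rangle$ is itself a relation on $FX \times FY$, namely the relational lifting $F[R]$ as defined in the paper, and weak pullback preservation says exactly that this lifting contains the kernel pair of $(Ff!, Fg!)$. Framing it this way ties the argument back to the paper's own Definition of relational lifting and makes the role of the ``standard'' hypothesis more transparent.
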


\begin{definition}[Parallel Coalgebra Composition]
  \label{def:coalg-par}%
  Coalgebras with the same carrier can be combined in parallel: Let
  $(X, f)$ be an $F$-coalgebra and $(X, g)$ be a $G$-coalgebra.  Then
  $(X, \langle f, g \rangle)$ is an $(F \times G)$-coalgebra, where
  \begin{equation*}
    \langle f, g \rangle(x) = \bigl(f(x), g(x)\bigr)
  \end{equation*}
\end{definition}

\subsection{Moss's Coalgebraic Logic}

The idea of Moss's coalgebraic logic \cite{Moss1999} is to replace
Kripe frames by $F$-coalgebras for some functor $F$, and to derive a
universal and natural modality from $F$ itself.

\begin{definition}[Moss's Coalgebraic Logic, Abstract]
  Fix a standard functor $F$.  Extend the syntax of propositional
  logic by a pseudo-unary connective $\nabla$ that, unlike the
  classical modalities like $\Box$, applies not to a single formula $A
  \in \mathit{Prop}$ but to an expression of type either $\widehat A
  \in F(\mathit{Prop})$ or $\widehat A \in F\fin(\mathit{Prop})$.  For
  infinitary $F$ where the choice makes a difference, the cases are
  called \emph{infinitary} and \emph{finitary} $F$-coalgebraic logics,
  respectively.  A Moss model is a structure $(X, f, \Vdash)$ where
  $(X, f)$ is an $F$-coalgebra and $\Vdash$ is a relation between
  coalgebra states and formulas, such that
  \begin{align*}
    x \Vdash \neg A \iff x \not\Vdash A
    &&
    x \Vdash A \to B \iff x \not\Vdash A \text{~or~} x \Vdash B
  \end{align*}
  as for Kripke models, but
  \begin{equation*}
    x \Vdash \nabla \widehat A \iff f(x) \mathrel{F[\Vdash]} \widehat A
  \end{equation*}
\end{definition}

Moss's coalgebraic logic as presented here specifies satisfaction only
up to atomic propositions, in analogy to Kripke frames.  In Moss's
original presentation, the specification is unique, in analogy to
Kripke models.

\begin{definition}[Moss's Coalgebraic Logic, Concrete]
  Let $(X, f)$ be an $F$-coalgebra.  Let $s : X \to \mathcal
  P(\mathit{Prop}_0)$ be the map that assigns to each state $x \in X$
  the desired set of valid atomic propositions.  Then $(X, s)$ is a
  $\mathrm{Const}\bigl(\mathcal P(\mathit{Prop}_0)\bigr)$-coalgebra.
  For the parallel composite coalgebra $(X, g = \langle f, s
  \rangle)$, a unique Moss model is specified by the additional clause
  \begin{align*}
    x \Vdash A \iff A \in s(x) \qquad (A \in \mathit{Prop}_0)
  \end{align*}
\end{definition}

The following two propositions state that traditional Kripke frames
are essentially equivalent to the special case $F = \mathcal P$.

\begin{lemma}
  \label{lemma:coalg-kripke-correspondence}%
  $\mathcal{P}$-coalgebras $(X, f)$ are in one-to-one correspondence
  to relations $R$ on $X$ by putting $x \mathrel{R} y$ if and only if
  $y \in f(x)$.
\end{lemma}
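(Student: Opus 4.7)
The plan is to exhibit two explicit maps between the set of $\mathcal P$-coalgebras on a fixed carrier $X$ and the set of binary relations on $X$, and then verify that they are mutually inverse. This is essentially the well-known currying correspondence between functions $X \to \mathcal P X$ and subsets of $X \times X$, recast in the coalgebraic language.

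First I would define the forward map. Given a $\mathcal P$-coalgebra $(X, f)$, that is, a function $f : X \to \mathcal P X$, set
\begin{equation*}
  R_f = \{ (x, y) \in X \times X \mid y \in f(x) \} \subseteq X \times X.
\end{equation*}
Then I would define the backward map. Given a relation $R \subseteq X \times X$, define $f_R : X \to \mathcal P X$ by
\begin{equation*}
  f_R(x) = \{ y \in X \mid x \mathrel{R} y \}.
\end{equation*}
The target is indeed $\mathcal P X$ since $f_R(x)$ is a subset of $X$ for every $x$.

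Next I would verify that these assignments are mutually inverse. In one direction, for any relation $R$, the lifted relation $R_{f_R}$ consists of those pairs $(x,y)$ with $y \in f_R(x)$, which by the very definition of $f_R$ is the condition $x \mathrel R y$; hence $R_{f_R} = R$. In the other direction, for any $\mathcal P$-coalgebra $(X,f)$, the coalgebra $f_{R_f}$ sends $x$ to the set $\{y \mid x \mathrel{R_f} y\} = \{ y \mid y \in f(x)\} = f(x)$, so $f_{R_f} = f$. The main thing to check is simply that the two defining clauses $y \in f(x)$ and $x \mathrel R y$ are literally the same condition under the correspondence, which is immediate.

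There is no real obstacle here; the statement is essentially an unfolding of notation. I would conclude by noting that the above establishes a bijection between the class of functions $X \to \mathcal P X$ and the class of subsets of $X \times X$, which is the asserted one-to-one correspondence.
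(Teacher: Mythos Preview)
Your argument is correct: you exhibit the two maps $f \mapsto R_f$ and $R \mapsto f_R$ and verify they are mutually inverse, which is exactly the standard currying bijection between $X \to \mathcal P X$ and $\mathcal P(X \times X)$. The paper does not actually give a proof of this lemma---it is stated as an evident fact and left to the reader---so your write-up simply makes explicit what the paper treats as immediate, and there is nothing to compare.
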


\begin{theorem}
  The Kripke modalities $\Box, \lozenge$ and the Moss modality
  $\nabla$ for finitary $\mathcal P$-coalgebraic logics are
  equivalent.  For infinitary $\mathcal P$-coalgebraic logics, they
  are also equivalent in the presence of infinitary conjunction and
  disjunction; otherwise $\nabla$ is generally more expressive.
  \begin{align*}
    &w \Vdash_{\mathrm K} \Box A \iff w \Vdash_{\mathrm M} \nabla \{ A
    \} \lor \nabla \emptyset & &w \Vdash_{\mathrm K} \lozenge A \iff w
    \Vdash_{\mathrm M} \nabla \{ A, \top \}
    \\
    &w \Vdash_{\mathrm M} \nabla \widehat A \iff w \Vdash_{\mathrm K}
    \Box \left({\textstyle\bigvee \widehat A}\right) \land {\textstyle
      \bigwedge \lozenge \widehat A} & &\text{where}\quad \lozenge
    \widehat A = \{ \lozenge B \mid B \in \widehat A \}
  \end{align*}
  where $\Vdash_{\mathrm K}$/\/$\Vdash_{\mathrm M}$ denote satisfaction
  {\`a} la Kripke/Moss, respectively.
\end{theorem}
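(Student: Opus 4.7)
My proof plan reduces everything to the specific shape of the powerset relational lifting. By Lemma~\ref{lemma:coalg-kripke-correspondence} I identify $f(x)$ with the set of $R$-successors of $x$, and the example given earlier unfolds $\mathcal P[\Vdash]$ as a two-sided matching. Hence $x \Vdash_{\mathrm M} \nabla \widehat A$ reduces to the conjunction of the following two clauses: (a) for every $R$-successor $y$ of $x$ there exists some $B \in \widehat A$ with $y \Vdash B$, and (b) for every $B \in \widehat A$ there exists an $R$-successor $y$ of $x$ with $y \Vdash B$. This unfolding is the single computational core of the theorem; everything else is just interpretation.

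From this unfolding, the third equivalence (Moss to Kripke) is immediate: clause (a) is literally $x \Vdash_{\mathrm K} \Box(\bigvee \widehat A)$, and clause (b) is literally $x \Vdash_{\mathrm K} \bigwedge \lozenge \widehat A$. The first two equivalences (Kripke to Moss) then follow by instantiating $\widehat A$. For $\lozenge A$, taking $\widehat A = \{A, \top\}$ renders clause (a) vacuous (every successor satisfies $\top$), while clause (b) reduces to $\lozenge A \land \lozenge \top$, which collapses to $\lozenge A$. For $\Box A$ the only real obstacle is the successor-free edge case: via the same unfolding, $\nabla \{A\}$ captures $\Box A \land \lozenge A$ and thus misses precisely those $x$ with $f(x) = \emptyset$. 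A direct check shows that $\nabla \emptyset$ holds iff $f(x) = \emptyset$ -- both clauses (a) and (b) become simultaneously vacuous in exactly that situation -- which fixes the gap and justifies the disjunctive form $\nabla \{A\} \lor \nabla \emptyset$ by a routine case split on whether $x$ has any successor at all.

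The finitary/infinitary split then concerns which translations are expressible in the respective target languages. For finite $\widehat A$, both $\bigvee \widehat A$ and $\bigwedge \lozenge \widehat A$ are ordinary propositional formulas, yielding a genuine two-way translation. Over infinite $\widehat A$, the same translation formally requires infinitary $\bigvee$ and $\bigwedge$ on the Kripke side, which is unproblematic once such operations are admitted. To support the strict-expressivity claim when infinitary Boolean operations are unavailable, I would exhibit two $\mathcal P$-coalgebras that agree on every finitary Kripke formula yet are separated by some single $\nabla \widehat A$ with infinite $\widehat A$; a standard construction with countably many successors each realising a distinct atomic proposition is adequate, and turning this sketch into a fully rigorous separation witness is the one step that genuinely goes beyond the routine unfolding above.
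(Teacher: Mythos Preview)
Your argument is correct: the unfolding of $\mathcal P[\Vdash]$ into the two clauses (a) and (b) is exactly the right computational core, and your case analyses for $\Box$, $\lozenge$, and the successor-free edge case are all sound. Note, however, that the paper does not actually supply a proof of this theorem --- it appears in the review section as a known result imported from Moss's original work, so there is no ``paper's own proof'' to compare against. Your direct verification via the explicit description of the powerset lifting is the standard route and would serve perfectly well as the omitted argument; the only part you flag as non-routine (the strict separation witness for the infinitary claim) is likewise not elaborated in the paper and is indeed the one place where genuine additional work beyond definition-unfolding is required.
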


In general, the infinitary version of the operator $\nabla$ is better
matched with a logic where conjunction and disjunction are also
infinitary.  While an uncommon topic classically, infinitary logics
are an important topic in modal logic because of their connection to
bisimulation.  The following theorem generalizes a theorem of
Kripke-style logic, where bisimilarity is defined ad-hoc but
equivalently to the coalgebraic notion specialized as in
Lemma~\ref{lemma:coalg-kripke-correspondence}.

\begin{theorem}
  In fully ($\land, \lor, \nabla$) infinitary $F$-coalgebraic logic,
  two states $s, t \in S$ satisfy the same set of formulas if and only
  if they are bisimilar.
\end{theorem}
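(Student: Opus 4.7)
The plan is to prove the two implications by standard coalgebraic techniques: structural induction for one direction, and a characteristic-formula construction for the other.

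For the easy direction ($\Leftarrow$), assume a bisimulation $R$ with coalgebra map $r : R \to F R$ relates $s$ and $t$, and proceed by structural induction on an arbitrary formula $A$ to show $s \Vdash A \iff t \Vdash A$ for every $(s,t) \in R$. Atomic propositions are handled via the parallel composite with the atomic-proposition coalgebra from Definition~\ref{def:coalg-par}: the constant functor lifts to the identity relation, so bisimilar states have identical atomic theories. Boolean connectives are routine. For $A = \nabla \widehat A$, suppose $s \Vdash \nabla \widehat A$, i.e.\ $f(s) \mathrel{F[\Vdash]} \widehat A$. The bisimulation witness $r(s,t) \in FR$ satisfies $F\pi_1(r(s,t)) = f(s)$ and $F\pi_2(r(s,t)) = g(t)$, so $f(s) \mathrel{F[R]} g(t)$. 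The induction hypothesis, applied to the subformulas occurring in $\widehat A$, yields that $R$ is contained in the relation of equi-satisfaction for those subformulas, and monotonicity plus functoriality of the lifting (both packed into the ``standard'' assumption on $F$) transports this containment to the $F$-level, giving $g(t) \mathrel{F[\Vdash]} \widehat A$, hence $t \Vdash \nabla \widehat A$.

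For the converse ($\Rightarrow$), let ${\equiv} \subseteq X \times Y$ denote logical equivalence of states, and aim to prove that $\equiv$ is itself a bisimulation. The key is to exhibit, for each pair $(s, t) \in {\equiv}$, a witness in $F({\equiv})$ projecting onto $f(s)$ and $g(t)$, equivalently $f(s) \mathrel{F[{\equiv}]} g(t)$. Using infinitary conjunction, assign to each logical-equivalence class $[x]$ a \emph{characteristic formula} $\chi_{[x]} = \bigwedge \{\, A \mid x \Vdash A \,\}$; then $u \Vdash \chi_{[x]}$ if and only if $u \equiv x$. Fix a section $\chi : X \to \mathit{Prop}$ with $\chi(x) = \chi_{[x]}$, and consider the formula $\nabla (F\chi)(f(s))$. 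Reflexivity of $\equiv$ gives $s \Vdash \nabla (F\chi)(f(s))$ immediately; hence $t$ satisfies the same formula by logical equivalence, which unfolds to $g(t) \mathrel{F[\Vdash]} (F\chi)(f(s))$. Translating $\Vdash$ back through $\chi$ into $\equiv$ produces the desired element of $F({\equiv})$ and closes the coinduction.

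The main obstacle will be that last translation step. Passing from ``$g(t) \mathrel{F[\Vdash]} (F\chi)(f(s))$'' to ``$f(s) \mathrel{F[{\equiv}]} g(t)$'' requires commuting the relational lifting with both relational composition and pushforward along $\chi$; this is precisely where preservation of weak pullbacks — hidden inside the standard-functor hypothesis — earns its keep, and it also silently depends on the fact that all choices of representative within an equivalence class give logically equivalent characteristic formulas. A secondary subtlety is that the characteristic conjunctions may range over a set as large as the whole language, so the theorem implicitly relies on $\bigwedge$ in its full infinitary form; without it, only a Hennessy–Milner variant restricted to image-finite coalgebras could be expected.
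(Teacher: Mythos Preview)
The paper does not prove this theorem: it appears in the review Section~\ref{ingredients2} as a known result imported from Moss~\cite{Moss1999}, stated without argument. There is therefore no ``paper's own proof'' to compare against.

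Judged on its own, your sketch follows the standard line from the coalgebraic-logic literature and is essentially sound. The invariance direction is the usual induction, and your handling of the $\nabla$ case correctly reduces to the fact that relational lifting respects inclusion of relations for standard $F$. For the converse, the characteristic-formula trick $\nabla\bigl((F\chi)(f(s))\bigr)$ is exactly Moss's device, and you rightly flag that extracting $f(s)\mathrel{F[{\equiv}]}g(t)$ from $g(t)\mathrel{F[\Vdash]}(F\chi)(f(s))$ is where weak-pullback preservation does the work (it makes lifting commute with relational composition, so that pulling $\Vdash$ back along the graph of $\chi$ recovers ${\equiv}$).

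One point deserves tightening: defining $\chi_{[x]}$ as the conjunction over \emph{all} formulas satisfied by $x$ is circular in size --- the formula class is not a set unless you first bound the language by a regular cardinal $\kappa$ exceeding the carrier and the arities of $F$, as Moss does. Your closing remark gestures at this but does not resolve it; the honest fix is to stratify formulas by ordinal rank and take $\chi$ at a sufficiently high level, or to work throughout in $\mathcal L_{\kappa}$ for suitable $\kappa$. With that caveat, the argument goes through.
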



\section{Constructions}
\label{constructions}

This section gives novel theoretical results by invetigating the
ramifications of the following recipe:
\begin{enumerate}
\item identify some generic $F$-coalgebraic view on dynamical systems,
\item use Moss's construction to obtain logics with $\nabla_F$
  modality, depending on the functor $F$,
\item relate $\nabla_F$ to established temporal logic operators.
\end{enumerate}

Note that all of the following constructions have the state space $S$
of a fixed dynamical system as the carrier of some coalgebra for
various functors.  Hence the associated logical languages can coexist
naturally in a single system, by the parallel composition given in
Definition~\ref{def:coalg-par}.

\subsection{Step Logics}

\begin{definition}[Step Coalgebra]
  Let $\mathbb S = (\mathbb T, S, \Phi)$ be a dynamical system.  For
  any element $t \in T$, the $\mathcal I$-coalgebra $(S, \Phi^t)$ is
  called the $t$-\emph{step coalgebra} of $\mathbb S$.
\end{definition}

\begin{definition}[Multi-Step Coalgebra]
  Let $\mathbb S = (\mathbb T, S, \Phi)$ be a dynamical system.  For
  any subset $U \subseteq T$, the $\_^U$-coalgebra $(S, s \mapsto
  \Phi_s \circ \mathrm{in})$, given the inclusion map $\mathrm{in} : U
  \to T$, is called the $U$-\emph{multi-step coalgebra} of $\mathbb
  S$.
\end{definition}

\begin{lemma}
  The $\nabla$ modality of step coalgebras amounts to
  \begin{itemize}
  \item for the $t$-step:
    \begin{equation*}
      s \Vdash \nabla A  \iff  \Phi(s, t) \Vdash A
    \end{equation*}
  \item for the $U$-multi-step:
    \begin{equation*}
      s \Vdash \nabla \widehat A  \iff  \Phi(s, t) \Vdash \widehat A(t) \text{~for all~} t \in U
    \end{equation*}
  \end{itemize}
  The functors for $t$-steps and finite $U$-multi-steps are finitary;
  hence no additional distinction between finitary and infinitary
  logics arises.
\end{lemma}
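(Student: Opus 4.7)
The plan is to unfold the definition of Moss's coalgebraic logic in each case, using the explicit descriptions of the relational liftings already recorded in the Example following the Definition of Relational Lifting. Since both clauses are purely computational, I expect no real obstacle beyond bookkeeping; the challenge is merely to present the unfolding cleanly.

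First, for the $t$-step coalgebra $(S, \Phi^t)$, the functor is $\mathcal{I}$, whose lifting satisfies $x \mathrel{\mathcal{I}[R]} y \iff x \mathrel{R} y$. An argument $\widehat{A} \in \mathcal{I}(\mathit{Prop}) = \mathit{Prop}$ is just a single formula $A$. The generic Moss clause $x \Vdash \nabla\widehat{A} \iff f(x) \mathrel{F[\Vdash]} \widehat{A}$ therefore collapses to $s \Vdash \nabla A \iff \Phi^t(s) \mathrel{\Vdash} A$, which is the claim since $\Phi^t(s) = \Phi(s, t)$. Finitariness of $\mathcal{I}$ was already noted in the Example of standard functors, so $\mathcal{I}\fin = \mathcal{I}$ and no distinction of logics arises.

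Secondly, for the $U$-multi-step coalgebra $(S, s \mapsto \Phi_s \circ \mathrm{in})$, the functor is $\_^U$, whose lifting satisfies $f \mathrel{[R]^U} g \iff f(c) \mathrel{R} g(c)$ for all $c \in U$. An argument $\widehat{A} \in \mathit{Prop}^U$ is a function $U \to \mathit{Prop}$. The coalgebra structure sends $s$ to the function $t \mapsto \Phi(s, t)$ on $U$. Plugging into the generic Moss clause yields $s \Vdash \nabla \widehat{A} \iff \Phi(s, t) \mathrel{\Vdash} \widehat{A}(t)$ for all $t \in U$, which is the claim. When $U$ is finite, $\_^U$ is finitary by the Example, so again $(\_^U)\fin = \_^U$ and the finitary/infinitary distinction is vacuous.

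Finally, I would include one sentence remarking that the two unfoldings also make the base cases of the Kripke/Moss correspondence transparent: the $t$-step $\nabla$ behaves like a deterministic ``next after $t$'' modality, and the multi-step $\nabla$ behaves like a conjunction indexed by $U$ of such directed modalities, one per duration. No further verification is needed beyond these two direct substitutions.
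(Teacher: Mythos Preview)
Your proposal is correct: it is precisely the definitional unfolding of Moss's clause using the liftings for $\mathcal I$ and $\_^U$ recorded in the paper's examples, together with the already-noted finitariness of those functors. The paper itself gives no explicit proof of this lemma, treating it as immediate from the definitions, so your argument is exactly the computation the paper leaves to the reader; the closing remark about the deterministic-next and indexed-conjunction readings is extra commentary but harmless.
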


\begin{definition}[Step Modality]
  \begin{align*}
    \bigcirc A &= \nabla A &
    \bigcirc_t A &= \nabla u \mapsto
      \begin{cases}
        A & (t = u)
        \\
        \top & (t \neq u)
      \end{cases}
  \end{align*}
\end{definition}

\begin{example}
  \label{ex:steps}
  (Multi-)Step coalgebras are of particular interest for finite
  generators, since they specify the dynamics uniquely and concisely.
  The following are generating, cf.\ Example~\ref{ex:time}:
  \begin{itemize}
  \item For time $(\mathbb N, 0, +)$, the $1$-step coalgebra maps
    every state to its successor.  The resulting temporal logic has
    $\bigcirc$ as the \emph{next} operator of traditional unidirectional
    discrete-time temporal logic.
  \item For time $(\mathbb Z, 0, +)$, the $(\pm 1)$-step coalgebra
    maps every state to its successor/predecessor, respectively.  The
    resulting temporal logic has $\bigcirc_{\pm1}$ as the
    \emph{next}/\emph{previously} operators of traditional
    bidirectional discrete-time temporal logic, respectively.
  \item For ``time'' $(\Sigma^*, \varepsilon, \cdot)$, the
    $\Sigma$-multi-step coalgebra maps every automaton state to its
    response function (row of the transition table).  The resulting
    logic has $(\bigcirc_a)_{a\in\Sigma}$ as the generating cases of
    Pratt's \emph{necessity} operators $[a]$ in dynamic
    logic \cite{Pratt1976}, where they are extended to the free Kleene
    algebra over $\Sigma$.
  \end{itemize}
  Interesting infinite, non-generating examples include:
  \begin{itemize}
  \item For time $(\mathbb R, 0, +)$ and $\delta > 0$, let $U$ denote
    the open interval $(-\delta, \delta)$.  The $U$-multi-step
    coalgebra maps every state to its temporal $\delta$-neighbourhood.
  \end{itemize}
\end{example}

\begin{lemma}
  The modality $\nabla$ and the family of modalities $(\bigcirc_t)_{t
    \in U}$ for generating $U$ are straightforwardly equivalent if $U$
  is finite, and equivalent in the presence of infinitary conjunction
  otherwise.
  \begin{equation*}
    x \Vdash \nabla \widehat A \iff x \Vdash \bigwedge_{t \in U} \bigcirc_t \widehat A(t)
  \end{equation*}
\end{lemma}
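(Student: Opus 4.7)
The plan is to reduce both sides to the same condition by unfolding the semantic clauses of the preceding lemma, and then use the definition of $\bigcirc_t$ in terms of the multi-step $\nabla$.

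First I would compute the meaning of each conjunct on the right-hand side. By the definition of $\bigcirc_t$, the formula $\bigcirc_t \widehat A(t)$ is $\nabla (\widehat B_t)$, where $\widehat B_t : U \to \mathit{Prop}$ sends $t$ itself to $\widehat A(t)$ and every other $u \in U$ to $\top$. Applying the semantic clause for the $U$-multi-step $\nabla$ yields that $s \Vdash \bigcirc_t \widehat A(t)$ iff $\Phi(s, u) \Vdash \widehat B_t(u)$ for all $u \in U$. Since $\widehat B_t(u) = \top$ is trivially satisfied for $u \neq t$, this collapses to the single condition $\Phi(s, t) \Vdash \widehat A(t)$.

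Next I would unfold the left-hand side directly: by the same semantic clause, $s \Vdash \nabla \widehat A$ iff $\Phi(s, t) \Vdash \widehat A(t)$ for all $t \in U$. Combining this with the previous step, the statement $s \Vdash \bigcirc_t \widehat A(t)$ for every $t \in U$ is exactly the same condition, so conjoining over $t \in U$ gives the desired equivalence
\begin{equation*}
  s \Vdash \nabla \widehat A \iff s \Vdash \bigwedge_{t \in U} \bigcirc_t \widehat A(t).
\end{equation*}

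Finally, I would address the finiteness caveat. If $U$ is finite, then $\bigwedge_{t \in U}$ is an ordinary finite conjunction, so the equivalence lives entirely inside the finitary fragment; moreover $\_^U$ is itself finitary, so no choice between finitary and infinitary $F$-coalgebraic logic arises. If $U$ is infinite, the right-hand side is an infinite conjunction that can only be formed in a logic admitting infinitary $\bigwedge$, while the left-hand side is a single $\nabla$-formula indexed by $\widehat A \in \mathit{Prop}^U$, which is meaningful in the infinitary $\_^U$-coalgebraic logic. There is no real obstacle here; the only care needed is to note that the assumption that $U$ generates $\mathbb T$ plays no role in the equivalence itself (it is the pragmatic reason one considers such $U$ at all, not a hypothesis of the proof).
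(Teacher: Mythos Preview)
Your argument is correct and is exactly the direct unfolding the paper has in mind: the lemma is stated without proof (the word ``straightforwardly'' in its statement signals that verification is by expanding the definitions), and you have supplied precisely those details. Your observation that the hypothesis that $U$ generates $\mathbb T$ plays no role in the equivalence itself is also accurate.
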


The following construction is the multi-step limit case $U = T$.

\subsection{Trajectory Logics}

\begin{definition}[Trajectory Coalgebra]
  Let $\mathbb S = (\mathbb T, S, \Phi)$ be a dynamical system.  The
  $\_^T$-coalgebra $(S, s \mapsto \Phi_s)$ is called the
  \emph{trajectory coalgebra} of $\mathbb S$.
\end{definition}

\begin{lemma}
  The $\nabla$ modality of trajectory coalgebras amounts to
  \begin{equation*}
    s \Vdash \nabla \widehat A  \iff  \Phi(s, t) \Vdash \widehat A(t) \text{~for all~} t \in T
  \end{equation*}
\end{lemma}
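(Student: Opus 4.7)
The plan is to unfold the abstract definition of the $\nabla$ connective for the specific functor $F = \_^T$ and the specific coalgebra map $s \mapsto \Phi_s$, and then invoke the explicit form of the relational lifting for Hom functors that was already computed in the example following the definition of relational lifting. No new ingredients beyond these should be required.

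First I would fix notation: for the trajectory coalgebra $(S, f)$ with $f(s) = \Phi_s$, an element $\widehat A$ on which $\nabla$ operates belongs to $\mathit{Prop}^T$, i.e.\ is a function $\widehat A : T \to \mathit{Prop}$. By the abstract semantic clause,
\begin{equation*}
  s \Vdash \nabla \widehat A \iff f(s) \mathrel{F[\Vdash]} \widehat A \iff \Phi_s \mathrel{[\Vdash]^T} \widehat A.
\end{equation*}

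Next I would invoke the characterization of the Hom lifting, namely $g \mathrel{[R]^C} h \iff g(c) \mathrel R h(c)$ for all $c \in C$. Specializing $R := {\Vdash}$ and $C := T$, the right-hand side above becomes: $\Phi_s(t) \Vdash \widehat A(t)$ for all $t \in T$. Since $\Phi_s(t) = \Phi(s,t)$ by the definition of the trajectory, this is the claimed statement.

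I do not anticipate a significant obstacle; the result is essentially a routine chase of definitions, and the only point that needs minor care is bookkeeping the types, in particular recognizing $\widehat A \in \mathit{Prop}^T$ as a $T$-indexed family of formulas so that the notation $\widehat A(t)$ in the conclusion is meaningful. The proof can therefore be presented as a short two-line chain of equivalences.
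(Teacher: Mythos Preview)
Your proposal is correct and is exactly the routine definitional unfolding that the paper has in mind; the paper in fact states this lemma without proof, relying implicitly on the abstract semantic clause for $\nabla$ together with the already-computed lifting $[R]^C$ for the Hom functor. There is nothing to add or correct.
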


The $\nabla$ trajectory modality is a surprisingly powerful logical
operator, with the severe disadvantage that there is no canonical
syntactic representation.  The following examples are but a small
subset of useful special cases.

\begin{example}
  Arguments of the $\nabla$ trajectory modality are functions of type
  $T \to \mathit{Prop}$.  Various intensional notations for such
  functions, or time-dependent formulas, give rise to well-known
  temporal operators.  Note that all following examples work for
  finitary $\nabla$.
  \begin{itemize}
  \item Consider discrete time $(\mathbb N, 0, +)$ or $(\mathbb Z, 0,
    +)$.  Define a \emph{zip} operator as
    \begin{equation*}
      A \leftrightharpoons B = \nabla t \mapsto
      \begin{cases}
        A & t \text{~even}
        \\
        B & t \text{~odd}
      \end{cases}
    \end{equation*}
    Then a dynamic system is bipartite, with characteristic
    formula $A$, if and only if $(A \leftrightharpoons
    \neg A) \lor (\neg A \leftrightharpoons A)$ is valid in the Moss
    model associated with its trajectories.
  \item Consider automaton time $(\Sigma^*, \varepsilon, \cdot)$.  
    Define a \emph{consumption} operator as
    \begin{equation*}
      \mathit{eat}(L, A, B) = \nabla t \mapsto
      \begin{cases}
        A & t \in L
        \\
        B & t \not\in L
      \end{cases}
    \end{equation*}
    for languages $L \subseteq \Sigma^*$ and formulas $A, B$.  Now let
    $A$ be a formula characterizing accepting states.  Then an
    automaton, as a dynamical system, accepts
    \begin{itemize}
    \item at least the language $L \subseteq \Sigma^*$ if and only if
      $\mathit{eat}(L, A, \top)$
    \item exactly the language $L \subseteq \Sigma^*$ if and only if
      $\mathit{eat}(L, A, \neg A)$
    \end{itemize}
    is valid for its initial state(s) in the Moss model associated
    with its trajectories.
  \item Consider time with a linear antisymmetric order $<$.  Define a
    \emph{change} operator as
    \begin{equation*}
      \mathit{chg}(t, A, B, C) = \nabla u\mapsto
      \begin{cases}
        A & u < t
        \\
        B & u = t
        \\
        C & u > t
      \end{cases}
    \end{equation*}
    for time duration $t$ and formulas $A, B, C$.  Then
    minimum/maximum-duration operators can be defined directly, in two
    variants differing in the inclusion of boundary cases:
    \begin{align*}
      \mathrm{min}\, t.\; A &= \mathit{chg}(t, A, \top, \top) &
      \mathrm{max}\, t.\; A &= \mathit{chg}(t, \top, \top, \neg A)
      \\
      \mathrm{min}'\, t.\; A &= \mathit{chg}(t, A, A, \top)
      &
      \mathrm{max}'\, t.\; A &= \mathit{chg}(t, \top, \neg A, \neg A)
    \end{align*}
    Imprecise operators such as \emph{until} can be expressed as
    infinitary disjunctions:
    \begin{equation*}
      A \mathbin{\mathbf{U}} B = \bigvee_{t \in T} \mathit{chg}(t, A, B, \top)
    \end{equation*}
  \end{itemize}
\end{example}

\subsection{Orbit Logics}

The following construction shifts the coalgebraic focus from
trajectories to orbits which are images of trajectories, hence
abstracting from durations.  The result is a family of qualitive
temporal logics that can be expressed naturally in the classical modal
operators, uniformly for all kinds of time structure.

\begin{definition}[Orbit Coalgebra]
  Let $\mathbb S = (\mathbb T, S, \Phi)$ be a dynamical system.  The
  $\mathcal P$-coalgebra $(S, \Phi^\circ)$ is called the \emph{orbit
    coalgebra} of $\mathbb S$.  We say that in $\mathbb S$, $y$ is
  \emph{reachable} from $x$, written $x \leadsto_{\mathbb S} y$, if
  and only if $y \in \Phi^\circ(x)$.
\end{definition}

\begin{lemma}
  \label{lemma:dyn2rel}%
  For dynamical systems $\mathbb S$, the reachability relation
  $\leadsto_{\mathbb S}$ is
  \begin{enumerate}
  \item always a preorder,
  \item additionally non-branching, but not generally linear, if
    $\mathbb S$ is linear-time,
  \item additionally symmetric if $\mathbb S$ is invertible.
  \end{enumerate}
\end{lemma}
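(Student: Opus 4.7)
The plan is to verify each of the three conditions directly from the two monoid-action axioms $\Phi(s,0) = s$ and $\Phi(\Phi(s,t),u) = \Phi(s, t+u)$, combined with the relevant structural hypothesis on $\mathbb T$. In each case I will work with the explicit witnesses from the definition $x \leadsto_{\mathbb S} y \iff \exists t \in T\colon y = \Phi(x,t)$, so that a property of $\leadsto_{\mathbb S}$ can be traded for a property of $\leq_{\mathbb T}$ transported by $\Phi$.

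For part (1), reflexivity $x \leadsto_{\mathbb S} x$ is witnessed by $t = 0$ via the first action axiom. For transitivity, given $y = \Phi(x,t)$ and $z = \Phi(y,u)$, the second action axiom yields $z = \Phi(x, t+u)$, so $t+u$ witnesses $x \leadsto_{\mathbb S} z$. For part (3), invertibility means $\mathbb T$ is symmetric, which by the Groups lemma makes $\mathbb T$ a group; hence every $t \in T$ has an inverse $-t$, and if $y = \Phi(x,t)$ then $\Phi(y,-t) = \Phi(x, t + (-t)) = \Phi(x, 0) = x$ shows $y \leadsto_{\mathbb S} x$.

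The only mildly interesting part is (2). Assume $\mathbb T$ is linear and suppose $y = \Phi(x,t)$ and $z = \Phi(x,u)$. Linearity of $\leq_{\mathbb T}$ supplies a witness $v \in T$ with either $u = t + v$ or $t = u + v$. In the first case $z = \Phi(x, t+v) = \Phi(\Phi(x,t), v) = \Phi(y, v)$, so $y \leadsto_{\mathbb S} z$; the second case is symmetric, giving $z \leadsto_{\mathbb S} y$. To demonstrate that $\leadsto_{\mathbb S}$ need not itself be linear, I will exhibit a minimal counterexample: take time $\mathbb N$ (linear) together with the two-element state space $S = \{a,b\}$ and trivial dynamics $\Phi^t = \mathrm{id}_S$. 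Then $a$ and $b$ are mutually unreachable, so linearity of $\leadsto_{\mathbb S}$ fails.

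Since each clause reduces to a one-line calculation, there is no substantive obstacle; the only technical care needed is to not confuse the monoid preorder $\leq_{\mathbb T}$ (which the hypotheses constrain directly) with the induced reachability relation $\leadsto_{\mathbb S}$ (which is the object of the conclusions). The counterexample in (2) also serves to remind the reader that all these properties can be strictly weaker than their $\mathbb T$-level counterparts because orbits of $\Phi$ may partition $S$ into multiple incomparable pieces.
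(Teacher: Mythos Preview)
Your proposal is correct and follows essentially the same route as the paper's proof: both arguments use $0$ and $t+u$ for the preorder axioms, invoke linearity of $\leq_{\mathbb T}$ to produce a witness $v$ for non-branching, and appeal to the inverse $-t$ (via the Groups lemma) for symmetry. The only cosmetic difference is the choice of counterexample for failure of linearity: you take $\mathbb T = (\mathbb N,0,+)$ with identity dynamics on a two-point state space, whereas the paper uses the trivial one-element monoid $\mathbb T = \{0\}$; both work for the same reason (multiple orbits that never meet).
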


\begin{proof}
  We have $x \leadsto_{\mathbb S} y$ if and only if there is some $t$
  such that $\Phi(x, t) = y$.  We say $x \leadsto_{\mathbb S} y$ via
  $t$.
  \begin{enumerate}
  \item Reflexivity and transitivity follow directly from the monoid
    axioms: $x \leadsto_{\mathbb S} x$ via $0$, and if $x
    \leadsto_{\mathbb S} y$ via $t$ and $y \leadsto_{\mathbb S} z$ via
    $u$, then $x \leadsto_{\mathbb S} z$ via $t + u$.
  \item Assume that $x \leadsto_{\mathbb S} y$ via $t$ and $x
    \leadsto_{\mathbb S} z$ via $u$.  By linearity of $\mathbb T$
    assume, without loss of generality, that $t \leq_{\mathbb T} u$
    via $v$.  Then $y \leadsto_{\mathbb S} z$ via $v$.
  \item For symmetric $\mathbb T$, if $x \leadsto_{\mathbb S} y$ via
    $t$, then $y \leadsto_{\mathbb S} x$ via $-t$. \qedhere
  \end{enumerate}
\end{proof}

\noindent The weakening in case 2 of the preceding proposition is necessary.

\begin{example}[Nonlinear Linear-Time Dynamical System]
  Set $T = \{0\}$, giving rise to the singleton monoid which is
  trivially linear.  This fixes $\Phi$ completely as $\Phi(s, t) =
  \Phi(s, 0) = s$, giving rise to a ``still-life'' structure of time.
  Then neither $x \leadsto_{\mathbb S} y$ nor $y \leadsto_{\mathbb S}
  x$ for $x \neq y$.
\end{example}

\begin{definition}[Orbital Frame]
  A Kripke frame is called \emph{orbital} if and only if it
  corresponds, in the sense of
  Lemma~\ref{lemma:coalg-kripke-correspondence}, to the orbital
  coalgebra of some dynamical system.  An orbital frame is called
  linear-time/invertible if and only if it corresponds to the orbital
  coalgebra of some linear-time/invertible dynamical system,
  respectively.
\end{definition}

\noindent Using the preceding definition, Lemma~\ref{lemma:dyn2rel}
extends to Kripke frames.

\begin{lemma}
  \label{lemma:dyn2rel2}%
  For any orbital Kripke frame $\mathbb F = (W, R)$, the relation $R$ is
  \begin{enumerate}
  \item always a preorder,
  \item additionally non-branching if $\mathbb F$ is linear-time,
  \item additionally symmetric if $\mathbb F$ is invertible.
  \end{enumerate}
\end{lemma}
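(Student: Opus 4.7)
The plan is to reduce the statement to Lemma~\ref{lemma:dyn2rel} via the coalgebra/relation correspondence of Lemma~\ref{lemma:coalg-kripke-correspondence}, so that essentially no new work needs to be done beyond unfolding definitions.

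First I would unfold the definition of an orbital Kripke frame: by hypothesis there exists a dynamical system $\mathbb S = (\mathbb T, W, \Phi)$ whose orbit coalgebra $(W, \Phi^\circ)$ corresponds to $(W, R)$ in the sense of Lemma~\ref{lemma:coalg-kripke-correspondence}. Chasing that correspondence, $x \mathrel R y$ iff $y \in \Phi^\circ(x)$ iff there is some $t \in T$ with $\Phi(x,t) = y$, which by the Definition of the orbit coalgebra is precisely $x \leadsto_{\mathbb S} y$. Thus $R$ coincides with the reachability relation of $\mathbb S$ as a relation on $W$.

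Next, I would transfer each clause of Lemma~\ref{lemma:dyn2rel} through this identification. For clause~1, no additional hypothesis is needed, so $R = {\leadsto_{\mathbb S}}$ is a preorder by part~1 of Lemma~\ref{lemma:dyn2rel}. For clause~2, the assumption that $\mathbb F$ is linear-time is, by definition of the linear-time qualifier on orbital frames, exactly the assumption that one may choose the witnessing $\mathbb S$ to be linear-time; hence part~2 of Lemma~\ref{lemma:dyn2rel} applies and delivers non-branching. For clause~3, the assumption that $\mathbb F$ is invertible lets one choose $\mathbb S$ invertible, and part~3 of Lemma~\ref{lemma:dyn2rel} gives symmetry.

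There is no real obstacle, since everything is transport along a bijective correspondence; the only subtlety worth mentioning explicitly is that the definition of ``linear-time orbital frame'' and ``invertible orbital frame'' is witnessed by \emph{some} dynamical system, not necessarily the one originally chosen to exhibit orbitality. This is harmless because the reachability relation depends only on $R$, and any two witnesses produce the same $R$ by construction; in particular, if $\mathbb F$ is both orbital and, say, invertible, we are free to pick an invertible witness $\mathbb S$ and Lemma~\ref{lemma:dyn2rel} applies to that $\mathbb S$. I would close the proof by remarking that the non-implication to linearity in clause~2 is inherited from the corresponding non-implication already noted in Lemma~\ref{lemma:dyn2rel}, witnessed e.g.\ by the still-life example preceding the statement.
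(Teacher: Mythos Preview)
Your proposal is correct and matches the paper's approach exactly: the paper does not give a separate proof for this lemma but simply notes, in the sentence preceding it, that ``Lemma~\ref{lemma:dyn2rel} extends to Kripke frames'' via the definition of orbital frames. Your write-up is in fact more careful than the paper, since you explicitly address the witness-choice subtlety in the definitions of linear-time/invertible orbital frames.
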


\noindent This statement has a partial, finitary converse.

\begin{lemma}
  \label{lemma:rel2dyn}%
  A finite Kripke frame $(W, R)$ is
  \begin{enumerate}
  \item always orbital if $R$ is a preorder,
  \item additionally linear-time if $R$ is non-branching,
  \item additionally invertible if $R$ is symmetric.
  \end{enumerate}
\end{lemma}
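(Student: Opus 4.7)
My plan is to prove each of the three cumulative claims by exhibiting an explicit dynamical system $\mathbb{S} = (\mathbb{T}, W, \Phi)$ whose reachability relation $\leadsto_{\mathbb S}$ equals $R$, with the time monoid $\mathbb{T}$ chosen so as to satisfy the additional structural constraints of the case.

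For part (1), with $R$ an arbitrary preorder, let $\mathbb{T}$ be the monoid under composition of all self-maps $f : W \to W$ satisfying $x \mathrel R f(x)$ for every $x \in W$. Reflexivity places $\mathrm{id}_W$ in $\mathbb{T}$ and transitivity gives closure under composition. Setting $\Phi(x, f) = f(x)$ defines a monoid action, and the orbit $\Phi^\circ(x)$ equals the $R$-upset of $x$: the inclusion $\subseteq$ is immediate from the definition of $\mathbb{T}$, and for $\supseteq$, whenever $x \mathrel R y$, the \emph{switcher} $f_{x,y}$ that sends $x$ to $y$ and fixes every other element lies in $\mathbb{T}$ (using reflexivity at the fixed points) and evidently maps $x$ to $y$.

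For part (3), symmetry forces $R$ to be an equivalence relation. I would take $\mathbb{T} = \mathbb{Z}/L\mathbb{Z}$ with $L$ the least common multiple of the equivalence-class sizes, and let $\Phi^1$ be a permutation of $W$ that acts on each class as a full cyclic rotation. The orbit of $x$ is then $[x] = \uparrow_R x$, and since $\mathbb{T}$ is a cyclic group it is symmetric, so the dynamical system is invertible.

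Part (2) is the main obstacle, since standard linear monoids like $\mathbb{N}$ can either cycle within an equivalence class or strictly ascend through classes but not both simultaneously. Exploiting the fact that the quotient $W/{\sim}$ is a finite poset whose principal filters are chains (a forest of in-trees), let $d$ be its maximal chain length and $L$ the lcm of class sizes as before, and take $\mathbb{T} = \{0, \ldots, d-1\} \times \mathbb{Z}/L\mathbb{Z}$ with
\begin{equation*}
  (i_1, r_1) + (i_2, r_2) =
  \begin{cases}
    (i_1, r_1 + r_2) & \text{if } i_2 = 0, \\
    (\min(i_1 + i_2, d-1), r_2) & \text{if } i_2 > 0,
  \end{cases}
\end{equation*}
where arithmetic in the second coordinate is taken in $\mathbb{Z}/L\mathbb{Z}$. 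A routine case analysis gives associativity, and the induced order $\leq_{\mathbb T}$ depends only on the first coordinate, so it is linear but not symmetric once $d > 1$. Define $\Phi(x, (i, r))$ by first ascending $i$ classes above $[x]$ (clamping at the top of $x$'s own chain should $i$ exceed that height), picking a fixed representative of the reached class, and then applying an $r$-step cyclic shift within it. A level-counting argument then verifies the action axioms, and the orbit of $x$ traces out exactly the chain-of-classes forming $\uparrow_R x$.
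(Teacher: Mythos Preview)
Parts (1) and (3) are correct.  For (1) your monoid of $R$-compatible self-maps is a different and arguably cleaner choice than the paper's free monoid $(\mathbb N^*, \varepsilon, \cdot)$, but both succeed by the same one-step ``switcher'' idea.  For (3) your $\mathbb Z/L\mathbb Z$ is essentially the paper's $\mathbb Z$-construction with the irrelevant infinite tail trimmed off.

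Part (2) has a small but genuine gap: as written, $\Phi$ violates the identity axiom $\Phi(x,(0,0)) = x$.  Your recipe says to ``pick a fixed representative of the reached class and then apply an $r$-step cyclic shift,'' so $\Phi(x,(0,0))$ lands on the chosen representative of $[x]$ rather than on $x$ itself.  The repair is to treat $i = 0$ separately --- apply the $r$-shift starting from $x$, not from the representative --- and with that amendment the level-counting verification you sketch does go through.

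On the comparison for (2): the paper instead takes $\mathbb T = \mathbb N$ and lets $\Phi^1$ send each transient $x$ to its least strict successor $x'$ and each non-transient $x$ to the next element in a cyclic enumeration of its scc.  Your more elaborate product monoid is not gratuitous; it sidesteps a real obstruction you correctly diagnose.  With $\mathbb N$, once $\Phi^1$ leaves an scc it can never return, so if a transient $x$ has $\lvert [x]\rvert > 1$ the orbit of $x$ misses $[x]\setminus\{x\}$.  Concretely, take $W = \{a,b,c\}$ with $a \sim b$ and $a,b \ll c$: this frame is non-branching, yet the paper's $\Phi^1$ gives $\Phi^\circ(a) = \{a,c\}$, omitting $b$.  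So your construction, once the identity-axiom issue is fixed, is actually the more robust of the two.
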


\begin{proof}
  Construct a dynamical system $\mathbb S = (\mathbb T, S, \Phi)$ with
  $(\leadsto_{\mathbb S}) = R$.  In any case, clearly $S = W$.
  Proceed in reverse order and increasing flexibility of cases.  For
  the latter two, consider the partition of $W$ into \emph{strongly
    connected components} (sccs) of the preorder $R$: maximal subsets
  $C \subseteq X$ such that $x \mathrel R y$ for all $x, y \in C$.  We
  write $x \sim y$ if and only if $x, y$ are in the same scc, that is
  $x \mathrel R y$ and $y \mathrel R x$.
  \begin{enumerate}
    \setcounter{enumi}{2}
  \item Set $\mathbb T = (\mathbb Z, 0, +)$.  By symmetry of $R$ there
    are no related pairs across sccs.  For each component $C$ choose
    an arbitrary cyclic permutation.  Set $\Phi^1$ to their
    composition.
    Then
    \begin{itemize}
    \item $x \leadsto_{\mathbb S} y$ via some $i < k$, where $k$ is the
      size of the scc containing both, if $x \mathrel R y$, and
    \item otherwise $x \not\leadsto_{\mathbb S} y$.
    \end{itemize}
    \setcounter{enumi}{1}%
  \item Set $\mathbb T = (\mathbb N, 0, +)$.  We say that $y$ is a
    \emph{successor} of $x$, writing $x \ll y$, if and only if $x
    \mathrel R y$ but not $y \mathrel R x$.  Clearly, $x \mathrel R y$
    if and only if either $x \sim y$ or $x \ll y$.  We say that $x$ is
    \emph{transient} if it has successors.  Since $W$ is finite and
    $R$ is non-branching, every transient $x$ has a unique least
    successor $x'$, and all elements reachable from $x$ are
    successors.  Set $\Phi^1(x) = x'$.  For non-transient $x$, all
    elements reachable from $x$ are in the same scc.  Proceed as
    above.  Then
    \begin{itemize}
    \item $x \leadsto_{\mathbb S} y$ via some $i < k$, where $k$ is the
      number of successors of $x$, if $x \ll y$,
    \item $x \leadsto_{\mathbb S} y$ via some $i < k$, where $k$ is the
      size of the scc containing both, if $x \sim y$, and
    \item otherwise $x \not\leadsto_{\mathbb S} y$.
    \end{itemize}
    \setcounter{enumi}{0}%
  \item There are in general no least successors, and there may
    non-successors reachable from transient elements.  A more basic
    construction is needed: Set $\mathbb T = (\mathbb N^*,
    \varepsilon, \cdot)$, the free monoid over $\mathbb N$.  For each
    $x \in W$ choose some infinite sequence $y = (y_0, y_1, \dots) \in
    W^\omega$ such that $x \mathrel R z$ if and only if $z = y_i$ for
    some $i$.  This is always possible since the set $\{ z \mid x
    \mathrel R z\}$ is finite and nonempty.  For the generating steps
    $\{ \Phi^n \mid n \in \mathbb N\}$, set $\Phi^n(x) = y_n$.  Then
    \begin{itemize}
    \item $x \leadsto_{\mathbb S} y$ via $1$, if $x \mathrel R y$, and
    \item otherwise $x \not\leadsto_{\mathbb S} y$.  \qedhere
    \end{itemize}
  \end{enumerate}
\end{proof}


\begin{theorem}
  The modal logics $S4$/$S4.3$/$S5$ are sound and complete for
  arbitrary/linear-time/invertible orbital frames, respectively.
\end{theorem}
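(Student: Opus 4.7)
My plan is to treat soundness and completeness separately, each by reducing the new claim to the corresponding already-stated classical result (Theorem~\ref{theorem:log-frame}) via the structural bridge provided by Lemmas~\ref{lemma:dyn2rel2} and~\ref{lemma:rel2dyn}, with the finite model property (Theorem~\ref{theorem:log-fmp}) doing the load-bearing work on the completeness side.

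For soundness, I would observe that Lemma~\ref{lemma:dyn2rel2} exhibits the class of orbital frames as a subclass of the class of Kripke frames whose accessibility relation is a preorder, of the non-branching preorders in the linear-time case, and of the symmetric preorders in the invertible case. Since by Theorem~\ref{theorem:log-frame} the logics $S4$, $S4.3$ and $S5$ are sound for these larger classes, soundness transfers automatically to the respective orbital subclasses: any formula provable in $L$ is valid in every Kripke frame of the relational shape, hence in particular in every orbital frame of the corresponding kinematic shape.

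For completeness, the argument runs in the opposite direction and relies crucially on the finite model property. Suppose a formula $A$ is valid in every (linear-time/invertible) orbital frame but not provable in the associated logic $L$. By Theorem~\ref{theorem:log-fmp}, $L$ is complete for a class of \emph{finite} Kripke frames whose accessibility relation is an arbitrary/non-branching/symmetric preorder, so there exists some finite frame $(W, R)$ of this shape refuting $A$. Now Lemma~\ref{lemma:rel2dyn} provides the converse to Lemma~\ref{lemma:dyn2rel2} in the finite case: such a finite frame is in fact orbital, and moreover linear-time or invertible precisely when $R$ is non-branching or symmetric, respectively. Hence $(W, R)$ is an orbital frame of the required type that refutes $A$, contradicting the assumed validity of $A$ on that class.

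The only delicate point—and the step I would double-check most carefully—is matching the refining conditions across Theorems~\ref{theorem:log-frame} and~\ref{theorem:log-fmp} with the two directions of the orbital correspondence: the refuting finite frame obtained from the finite model property must not only be a preorder of the right flavour but must also fall into the ``subclass'' alluded to by Theorem~\ref{theorem:log-fmp}. I would make explicit that Lemma~\ref{lemma:rel2dyn} applies uniformly to every finite frame of the right relational shape, so regardless of which subclass of finite preorders witnesses completeness, each such witness is orbital of the appropriate dynamical type. With that alignment confirmed, the three-way case split on $L \in \{S4, S4.3, S5\}$ closes without further calculation.
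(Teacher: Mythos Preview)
Your argument is correct and mirrors the paper's own proof exactly: soundness by inclusion of orbital frames in the corresponding preorder classes via Lemma~\ref{lemma:dyn2rel2}, and completeness by combining the finite model property (Theorem~\ref{theorem:log-fmp}) with Lemma~\ref{lemma:rel2dyn} to turn a finite refuting preorder frame into an orbital one of the required dynamical type. Your extra remark that Lemma~\ref{lemma:rel2dyn} applies uniformly to any finite frame of the right relational shape, so it does not matter which particular subclass witnesses the finite model property, is a useful clarification that the paper leaves implicit.
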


\begin{proof}
  $S4$/$S4.3$/$S5$ are sound for the class of Kripke frames $(W, R$)
  where $R$ is an arbitrary/non-branching/symmetric preorder,
  respectively.  By Lemma~\ref{lemma:dyn2rel2}, they are also sound for
  the subclasses of arbitrary/linear-time/invertible orbital frames,
  respectively.

  $S4$/$4.3$/$S5$ are complete for the class of Kripke frames $(W, R$)
  where $R$ is an arbitrary/non-branching/symmetric preorder,
  respectively, and have the finite model property.  By
  Lemma~\ref{lemma:rel2dyn}, they are also complete for the
  subclasses of arbitrary/linear-time/invertible orbital frames,
  respectively.
\end{proof}

\begin{example}
  The operators $\Box$ and $\lozenge$ are well-suited to express
  ``long-term'' behavioral properties of dynamical systems.  For
  instance, let $A$ be the characteristic formula of a subset $U
  \subseteq S$ of the state space.  Then $U$ is a stationary solution
  of a dynamical system if and only if $A \to \Box A$ is valid in the
  Moss model associated with its orbits.
\end{example}

\section{Conclusion}

Many operators discussed in the temporal logic literature can be
subsumed under a common framework by viewing them as instances of
Moss's modality $\nabla$, for some coalgebraic presentation of the
underlying dynamical system models.  As a rule of thumb,
\begin{itemize}
\item step coalgebras go with discrete time,
\item trajectory coalgebras go with quantitative operators for either
  discrete or dense time, and
\item orbit coalgebras go with arbitrary time and qualitative
  operators, in particular the classical modal operators and the
  framework of normal modal logics.
\end{itemize}

The examples given in this article are of course only a small
selection to prove the viability of the approach.  There is
considerable potential for generalization.  The trajectory modality is
an extremely expressive tool, and it is likely that many other
temporal operators can be shown to coincide with particular
intensional notations for it.  Besides, coalgebraic perspectives on
dynamical systems other than the three detailed above could be
considered.  An interesting open problem and direction for future
research is the integration of measure-theoretic temporal operators,
for instance in duration calculus \cite{Chaochen1991}, into the
framework.



\appendix

\newcommand\doi[1]{\href{http://dx.doi.org/#1}{#1}}

\end{document}